\newif\ifreport\reporttrue
\renewcommand\footnotetextcopyrightpermission[1]{} 
\DeclareMathOperator*{\argmin}{arg\,min}
\title{Battle between Rate and Error in Minimizing Age of Information}
\begin{document}

\author{Guidan Yao}
\affiliation{The Ohio State University}
\email{yao.539@osu.edu}

\author{Ahmed M. Bedewy}
\affiliation{The Ohio State University}
\email{bedewy.2@osu.edu}
\author{Ness B. Shroff}
\affiliation{The Ohio State University}
\email{shroff.11@osu.edu}
%
\begin{abstract}
In this paper, we consider a status update system, in which update packets are sent to the destination via a wireless medium that allows for multiple rates, where a higher rate also naturally corresponds to a higher error probability. The data freshness is measured using age of information, which is defined as the age of the recent update at the destination. A packet that is transmitted with a higher rate, will encounter a shorter delay and a higher error probability. Thus, the choice of the transmission rate affects the age at the destination. In this paper, we design a low-complexity scheduler that selects between two different transmission rate and error probability pairs to be used at each transmission epoch. This problem can be cast as a Markov Decision Process. We show that there exists a threshold-type policy that is age-optimal. More importantly, we show that the objective function is quasi-convex or non-decreasing in the threshold, based on to the system parameters values. This enables us to devise a \emph{low-complexity algorithm} to minimize the age. These results reveal an interesting phenomenon: While choosing the rate with minimum mean delay is delay-optimal, this does not necessarily minimize the age. 
\end{abstract}

%

\keywords{Age of Information, Markov Decision Process, Transmission Rate, Threshold Policy}

\maketitle
\pagestyle{plain}
\section{Introduction}
\emph{Age of information} is a new metric that has attracted significant recent attention \cite{yates2015lazy,kadota2018scheduling,bedewy2020optimizing}. This concept has been motivated by the rapid growth of real-time applications, e.g., health monitoring, automatic driving system, and intelligent agriculture, etc. For such applications, freshness of information updates is of utmost importance. However, traditional metric like delay cannot fully characterize the freshness of information updates. For example, if information is updated infrequently, then the updates are not fresh even though the delay is small. To this end, age of information, or simply age, was proposed in \cite{kaul2011minimizing} as a measure of the data freshness. Specifically, age of information is defined as the time elapsed since the generation of the most recently received status update. 

There exist many works dealing with the age minimization problem. 
One class of works have focused on investigating optimal sampling and updating policy to minimize age of information. In \cite{sun2017update}, authors study the updating policy to minimize age in the presence of queuing delay. In \cite{bacinoglu2015age, wu2017optimal,zhou2018optimal,ceran2019average}, sampling and updating polices are studied under energy constraint. In \cite{bacinoglu2015age, wu2017optimal}, the authors assume that the channel is noiseless while in \cite{zhou2018optimal}, authors assume that channel state is known a priori and updating cost is a function of channel state to ensure successful transmission. 
In \cite{ceran2019average}, the authors consider transmission failure and investigate optimal sampling policy for age minimization under energy constraint. 
These works consider the effects of queueing delay, channel state, energy supply and minimize the age of information by controlling sampling and updating times, in which case they assume that there is only one transmission mode to transmit updates. However, in real systems, updates can be sent to a destination using heterogenous transmissions in terms of transmission delay and error probability. Two examples are provided as follows:

\emph{Error rate control:} 
Error rate control scheme is managed at physical layer.
 In particular, the transmission rate is often adapted via modulation and coding scheme to meet a fixed target error rate \cite{du2020balancing}. It is known that choosing a lower target error rate corresponds to a lower transmission rate, and hence a longer transmission delay. On the other hand, a higher transmission rate (i.e., a lower transmission delay) also corresponds to a higher transmission error probability of information delivery. Thus, there is a tradeoff between transmission delay and transmission success probability, both of which are affected by the target error rate.

\emph{Scheduling over channels in different frequencies:} It is common that a device can access channels in different frequencies. For example, cellphones can access WiFi (high frequency) and LTE (low frequency). If updates are transmitted over such devices, then the age of information may experience different transmission properties based on the carrier frequency. 
In particular, it is known that it is hard for radio waves to distract obstacles that are in same or larger size than their wavelength. Thus, low-frequency radios (longer wavelength) are less vulnerable to blockage than high-frequency radios, which implies that low frequency channels are more reliable than their high frequency counterparts. Of course, the higher frequency channels allow for higher rate (lower delay) transmissions, resulting in a similar tradeoff between the transmission delay and transmission success probability.  

The above examples clearly indicate that, transmission of updates can experience different transmission delays and error probabilities based on the choice of either target error rate or carrier frequency. In particular, a decrease in the transmission error probability will increase the chances of a successful update delivery (decrease age) while an increase in the transmission delay will increase the inter-delivery time (increase age). Thus, the key questions are: \emph{when is it optimal to use the lower transmission rate with a lower error probability?};\emph{which variable plays a more important role in determining the optimal actions?}. To address these questions, we begin by investigating a status update system with two heterogenous transmissions and obtain the optimal transmission selection policy to minimize the average age. 
Studying the two-rate scenario provides us with some insights in the optimal policy for a more general multi-rate (multi-error probability) scenario, which is discussed in Section \ref{diss}, and provides basis for our future work. Specifically, our contributions are outlined as follows:

\begin{itemize}
	\item We investigate the optimal trade-off between transmission delay and error probability for minimizing age. We show that there exists a stationary deterministic optimal transmission selection policy. Moreover, we show that the optimal transmission selection policy is of threshold-type in terms of age (Theorem \ref{THRE_PROP}). In particular, we show that the optimal decision is non-increasing (non-decreasing) in age if the mean delay of the low rate transmission is smaller (larger) than that of the high rate transmission. This result was not anticipated: For example, in \cite{ozkan2014optimal,de2005managing}, it was shown that the optimal delay policy chooses the server with minimum mean delay whenever it is available. With this, one may expect that using the transmission with higher mean delay would worsen the age performance. Surprisingly, however, we show that choosing the transmission with higher mean delay can sometimes improve the age performance. 
	\item We derive the average cost as a function of the threshold with the aid of the state transition diagram. We then optimize the threshold to minimize the average cost function. In particular, although the optimization problem is non-convex, we are able to show that if the mean delay of the low rate transmission is smaller than that of the high rate transmission, the objective function is quasi-convex; otherwise, the optimal policy chooses higher rate transmission (Theorem \ref{CONVEXITY}). This enables us to devise a low-complexity algorithm to obtain the optimal policy.  
\end{itemize}

The remainder of this paper is organized as follows. The system model is introduced in Section \ref{sec2}. In Section \ref{sec3}, we map the problem to an equivalent problem which can be regarded as an average cost MDP, and then formulate the MDP problem. In Section \ref{structure}, we explore the structure of the optimal policy and properties of average cost function, and devise an efficient algorithm. In Section \ref{diss}, we provide a disscusion on multi-rate scenario. In Section \ref{sec5}, we provide numerical results to verify our theoretical results. 

\section{System Model}
\label{sec2}
\begin{figure}
	\centering
	\includegraphics[width=0.44\textwidth]{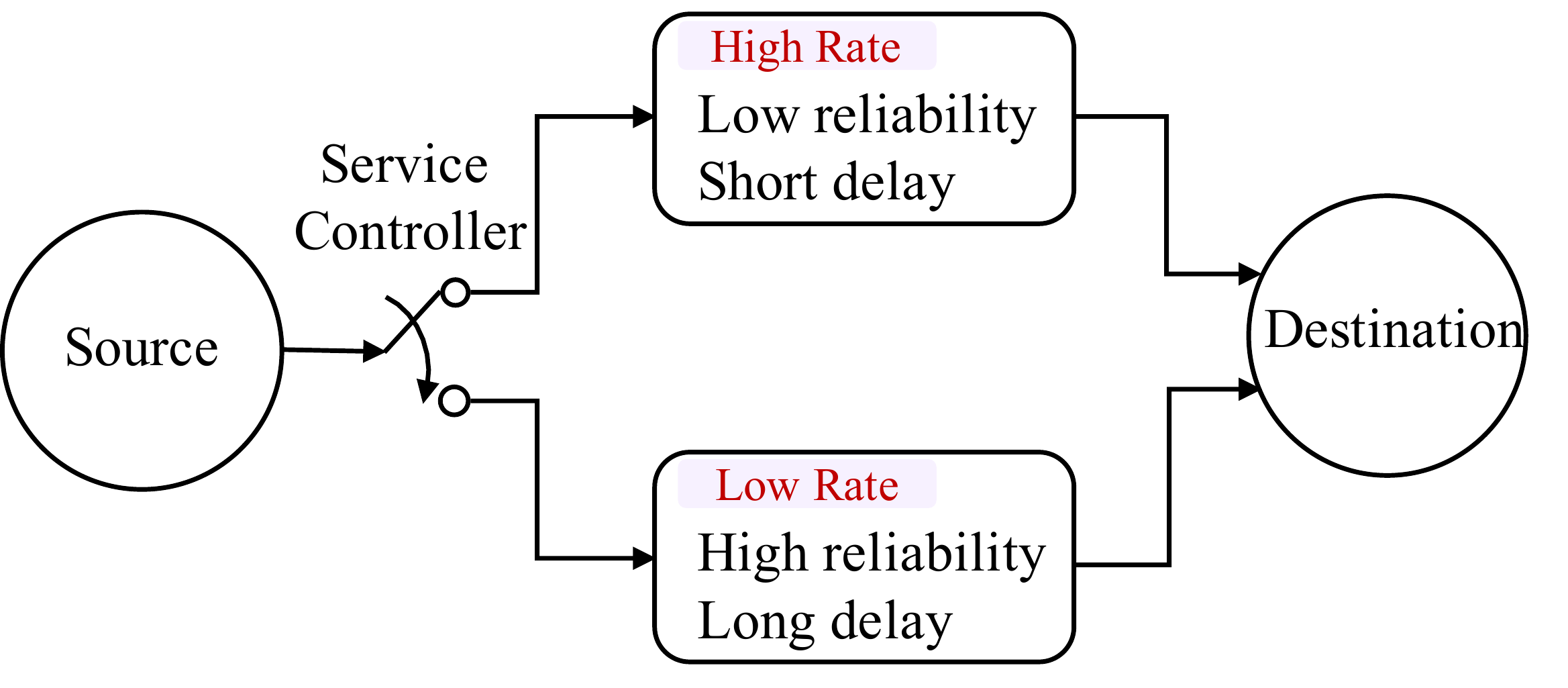}
	\caption{Model}
	\label{model}
\end{figure}
We consider a status update system, in which update packets are sent to the destination via a wireless medium with varying transmission delay and error probability. The update packets are generated whenever the wireless medium becomes idle.  
We assume that there are two heterogenous transmissions available for updating, namely low rate and high rate transmissions. The high rate transmission offers a shorter transmission delay than low rate transmission; while low rate transmission offers more reliable transmission than high rate transmission. A decision maker chooses a transmission rate for each transmission opportunity. 
We denote the set of transmission rates as $\mathcal{U}\triangleq\{1,2\}$, where 1 and 2 denote the low rate and high rate transmissions, respectively. We use $\mathcal{P}\triangleq\{p_j:0<p_j<1,j\in\mathcal{U}\}$ and $\mathcal{D}\triangleq\{d_j: 0<d_j<\infty,j\in\mathcal{U}\}$ to denote the set of transmission error probabilities and transmission delays, respectively. Transmission $j\in\mathcal{U}$ corresponds to transmission delay $d_j$ and transmission error probability $p_j$. We assume that $d_1>d_2$ and $p_1<p_2$.

We use $Y_i$ to denote the transmission delay of packet $i$, where $Y_i\in \mathcal{D}$. Let $D_i$ denote the delivery time of packet $i$. Since updates are generated whenever the wireless medium becomes idle, $D_i$ equals to the generation time of packet $i+1$. Also, we have $D_i=\sum_{j=1}^{i}Y_j$. 

\begin{figure}[]
	\centering
	\includegraphics[width=.35\textwidth]{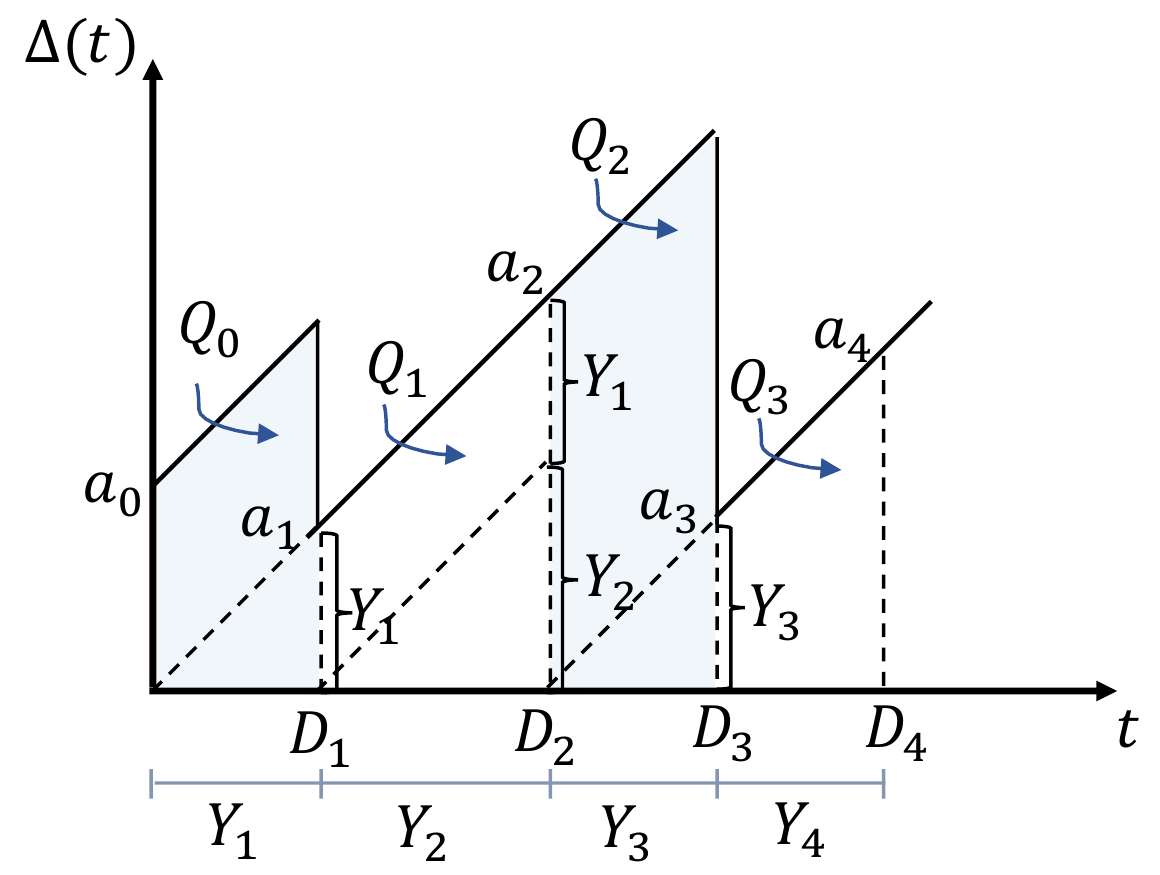}
	\caption{Example of Age Evolution}
	\label{fig1}
\end{figure}

At any time $t$, the most recently received update packet is generated at time 
\begin{equation}
	U(t)=\max\{D_i:D_{i+1}\leq t\}.
\end{equation}
Then, the \emph{age of information}, or simply \emph{age} is defined as
\begin{equation}
	\Delta(t)=t-U(t).
\end{equation}

The age $\Delta(t)$ is a stochastic process that increases with $t$ between update packets and is reset to a smaller value upon the successful delivery of a fresher packet. We suppose that the age $\Delta(t)$ is right-continuous. As shown in Fig. \ref{fig1}, packet $2$ is sent at time $D_1$ and its delivery time is $D_2=D_1+Y_2$. Since this packet transmission fails, the age does not reset to a smaller value at $D_2$. Packet $3$ transmission starts at $D_2$, which is successfully delivered at time $D_3$. Thus, the age increases linearly until it reaches to $\Delta(D_3^-)=Y_1+Y_2+Y_3$ before packet $3$ is successfully sent, and then drops to $\Delta(D_3)=Y_3$ at $D_3$.

\section{Optimization Problem}
\label{sec3}
We use $u_i$ to denote which transmission rate (low or high) is selected to transmit packet $i$, where $u_i \in \mathcal{U}$. In particular, if $u_i=1$ (or $u_i=2$), then packet $i$ is transmitted using the low (or high) rate transmission, encounters transmission delay $d_{1}$ (or $d_{2}$), and is received successfully with probability $1-p_{1}$ (or $1-p_{2}$). A transmission selection policy $\pi$ specifies a transmission selection decision for each stage\footnote[1]{Stage $i$ corresponds to the duration from $D_{i-1}$ to $D_{i}$.}. For any policy $\pi$, we define the total average age as
\begin{equation}
	\bar{\Delta}(\pi)=\limsup_{n \rightarrow \infty}\frac{\mathbb{E}[\int_{0}^{D_n}\Delta(t)dt]}{\mathbb{E}[D_n]}.
\end{equation}

Our goal is to seek a transmission selection policy that solves the total average age minimization problem as follows:
\begin{align}
	\bar{\Delta}^*=\min_{\pi \in \Pi} \bar{ \Delta}(\pi),
	\label{ori_prob}
\end{align}
where $\bar{\Delta}^*$ denotes the optimal total average age. Let $\Pi$ denote the set of all causal transmission selection policies, in which the policy $\pi\in\Pi$ depends on the history and current system state.

\subsection{Equivalent Mapping of Problem \eqref{ori_prob}}
We decompose the area under the curve $\Delta(t)$ into a sum of disjoint geometric parts as shown in Fig. \ref{fig1}. Observing the area in interval $[0, D_n]$, the area can be regarded as the concatenation of the areas $Q_i$. Then,
\begin{align}
	\int_{0}^{D_n} \Delta (t) dt=\sum_{i=0}^{n-1} [Q_i].\label{map}
\end{align}
Let $a_i$ denote the age at time $D_i$, i.e., $a_i=\Delta(D_i)$. Then, $Q_i$ can be expressed as
\begin{align}
	Q_i=a_i Y_{i+1}+\frac{1}{2}Y_{i+1}^2.\label{q}
\end{align}
Recall that $D_n=\sum_{i=0}^{n-1} Y_{i+1}$. With Eq. \eqref{q} and Eq. \eqref{map}, the total average age is expressed as
\begin{align}
\limsup_{n\rightarrow \infty}	\frac{\sum_{0}^{n-1} \mathbb{E}[a_i Y_{i+1}+\frac{1}{2}Y_{i+1}^2]}{\sum_{i=0}^{n-1}\mathbb{E}[ Y_{i+1}]}..
\label{total_avg_age}
\end{align}

With this, the optimal transmission selection problem for minimizing the total average age can be formulated as
\begin{align}
	\bar{\Delta}^*\triangleq\min_{\pi \in \Pi} \limsup_{n\rightarrow \infty}	\frac{\sum_{0}^{n-1} \mathbb{E}[a_i Y_{i+1}+\frac{1}{2}Y_{i+1}^2]}{\sum_{i=0}^{n-1}\mathbb{E}[ Y_{i+1}]}.
	\label{equiv1}
\end{align}
The problem is hard to solve in current form. Thus, we provide an equivalent mapping for it. A problem with parameter $\beta$ is defined as follows:
\begin{align}
	p(\beta)\triangleq\min_{\pi \in \Pi} \limsup_{n\rightarrow \infty}	\frac{1}{n}\sum_{0}^{n-1} \mathbb{E}[(a_i-\beta) Y_{i+1}+\frac{1}{2}Y_{i+1}^2]\label{equiv2}.
\end{align}

\begin{lemma}
\label{TRANS}
The following statements are true:	

\noindent(i) $\bar{\Delta}^* \gtreqqless \beta$ if and only if $p(\beta) \gtreqqless 0$;

\noindent(ii) If $p(\beta)=0$, then the optimal transmission selection policies that solve \eqref{equiv1} and \eqref{equiv2} are identical.
\end{lemma}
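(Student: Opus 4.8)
The plan is to prove this classical ``Dinkelbach-type'' equivalence between the fractional program \eqref{equiv1} and the parametrized linear program \eqref{equiv2}. Both parts hinge on a single algebraic identity: for any policy $\pi$ and any $\beta$, letting $N_n(\pi) = \sum_{i=0}^{n-1}\mathbb{E}[a_i Y_{i+1} + \tfrac12 Y_{i+1}^2]$ and $M_n(\pi) = \sum_{i=0}^{n-1}\mathbb{E}[Y_{i+1}]$, we have
\begin{equation}
	\frac{1}{n}\sum_{i=0}^{n-1}\mathbb{E}\Bigl[(a_i-\beta)Y_{i+1}+\tfrac12 Y_{i+1}^2\Bigr] = \frac{N_n(\pi) - \beta M_n(\pi)}{n}.
\end{equation}
The key point making the $\limsup$ manageable is that $M_n(\pi)/n$ is bounded above and below by positive constants, since $Y_{i+1}\in\{d_1,d_2\}$ with $0<d_2<d_1<\infty$, so $d_2 \le M_n(\pi)/n \le d_1$ for every $n$ and every $\pi$. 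This lets me pass between ratios and differences cleanly.

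For part (i), I would first show $\bar\Delta^* \le \beta \iff p(\beta) \le 0$ (the $\ge$ and $=$ cases are symmetric). For the forward direction: if $\bar\Delta^* \le \beta$, pick a near-optimal $\pi_\epsilon$ with $\bar\Delta(\pi_\epsilon) \le \beta + \epsilon$, i.e. $\limsup_n N_n/M_n \le \beta+\epsilon$; then along any subsequence realizing the $\limsup$ of $(N_n-\beta M_n)/n$, I bound $(N_n - \beta M_n)/n = (M_n/n)(N_n/M_n - \beta) \le d_1(N_n/M_n - \beta)$ when the bracket is nonnegative (and it is $\le 0$ trivially otherwise), and take $\limsup$ to get $p(\beta) \le d_1\epsilon$; letting $\epsilon\to 0$ gives $p(\beta)\le 0$. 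Conversely, if $p(\beta)\le 0$, then for every $\pi$, $\limsup_n (N_n(\pi)-\beta M_n(\pi))/n \le 0$ (up to $\epsilon$), and dividing by $M_n(\pi)/n \ge d_2 > 0$ before taking $\limsup$ — here I must be careful that $\limsup (A_n/B_n)$ with $B_n$ bounded in $[d_2,d_1]$ satisfies $\limsup A_n/B_n \le (1/d_2)\limsup A_n$ when the relevant quantities are eventually nonnegative, else the bound is immediate — yields $\bar\Delta(\pi)\le\beta$, hence $\bar\Delta^*\le\beta$. The strict and equality versions follow by the same estimates with inequalities reversed or combined.

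For part (ii), suppose $p(\beta)=0$. By part (i) this forces $\bar\Delta^*=\beta$. Let $\pi^\star$ solve \eqref{equiv1}, so $\bar\Delta(\pi^\star)=\beta$; I claim it also achieves $p(\beta)$, i.e. $\limsup_n (N_n(\pi^\star)-\beta M_n(\pi^\star))/n = 0$. The ``$\le 0$'' direction follows from $\bar\Delta(\pi^\star)=\beta$ via the $M_n/n \le d_1$ bound as in part (i); the ``$\ge 0$'' direction follows because $p(\beta)=0$ means no policy can do better than $0$. Conversely, if $\pi^\dagger$ solves \eqref{equiv2}, then $\limsup_n(N_n(\pi^\dagger)-\beta M_n(\pi^\dagger))/n = 0$, and dividing by $M_n/n\in[d_2,d_1]$ and taking $\limsup$ gives $\bar\Delta(\pi^\dagger)\le\beta=\bar\Delta^*$, so $\pi^\dagger$ is optimal for \eqref{equiv1}. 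Thus the optimal policy sets coincide.

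I expect the main obstacle to be the careful handling of the $\limsup$ when converting between the ratio $N_n/M_n$ and the difference $N_n - \beta M_n$: unlike the $\lim$ case, one cannot simply factor the limit of a product, so I need the uniform two-sided bound $d_2 \le M_n/n \le d_1$ together with a sign analysis of $N_n/M_n - \beta$ along the subsequences attaining each $\limsup$. A clean way to organize this is to prove two auxiliary inequalities once and for all: for sequences with $B_n\in[d_2,d_1]$, $\limsup_n (A_n - \beta B_n) \le 0 \implies \limsup_n (A_n/B_n) \le \beta$ and the reverse implication with the roles/constants adjusted; everything else is bookkeeping. The case analysis $d_1 \lessgtr d_2$ that appears elsewhere in the paper does not enter here — only $0 < d_2 < d_1 < \infty$ matters.
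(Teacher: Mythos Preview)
Your approach is correct and essentially identical to the paper's: both use the uniform bounds $d_2 \le M_n(\pi)/n \le d_1$ (from $Y_{i+1}\in\{d_1,d_2\}$) to convert between $\limsup_n N_n/M_n \lessgtr \beta$ and $\limsup_n (N_n-\beta M_n)/n \lessgtr 0$, treating the non-strict, strict, and equality cases in turn, and your handling of the $\limsup$-of-a-product subtlety is in fact more careful than the paper's. One quantifier slip to fix: in the converse direction of part~(i) you write ``for every $\pi$'' where it should be ``there exists $\pi$ (or a near-optimal $\pi_\epsilon$)'', since $p(\beta)=\min_\pi(\cdots)\le 0$ only gives existence; your subsequent step ``$\bar\Delta(\pi)\le\beta$, hence $\bar\Delta^*\le\beta$'' already presumes the existential reading, so this is a wording issue rather than a gap.
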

\ifreport
\begin{proof}
See Appendix \ref{appe0}.
\end{proof}
\else 
\begin{proof}
The proof is similar to that of Lemma 3.5 in \cite{bedewy2019age}. The difference is that we use the boundedness of transmission delay while in \cite{bedewy2019age}, the boundedness of inter-sampling time is used. The detailed proof is provided in our technical report \cite{yao2020age}.
\end{proof}
\fi

By Lemma \ref{TRANS}, if $\beta=\bar{\Delta}^*$, then the optimal transmission selection policies that solve \eqref{equiv1} and \eqref{equiv2} are identical. With this, given $\beta$, we formulate the problem \eqref{equiv2} as an infinite horizon average cost per stage MDP in Section \ref{mdp} and show that the optimal policy for \eqref{equiv2} is of threshold-type in Section \ref{struc}. Since the value of $\beta$ is arbitrary, we will be able to conclude that the optimal policy for \eqref{equiv1} is of threshold-type. In addition, in Section \ref{alg}, we are able to devise a low-complexity algorithm to obtain the optimal threshold.

\subsection{The MDP problem of \eqref{equiv2}}
\label{mdp}
From \cite{bertsekas1995dynamic}, given $\beta$, Problem \eqref{equiv2} is equivalent to an average cost per stage MDP problem. The components of the MDP problem are described as follows: 

\begin{itemize}
\item \textbf{States:} The system state at stage $i$ is the age $a_i$. In this paper, we consider the state space $\mathcal{S}\triangleq \{a=ld_1+vd_2: l, v\in \{0,1,\cdots\}\}$. If the initial state is outside $\mathcal{S}$, then eventually the state will enter $\mathcal{S}$ (with state $d_1$ or $d_2$); otherwise, a successful packet transmission never occurs. In fact, the maximal probability that no transmission succeeds after $l$ stages is $p_2^l$, which decreases with number of stages $l$. After state enters $\mathcal{S}$, it will stay in $\mathcal{S}$ onwards (since transmission delay is either $d_1$ or $d_2$). Note that $\mathcal{S}$ is unbounded since successful packet transmission happens with certain probabilities.  

\item \textbf{Actions:} At delivery time $D_{i-1}$, the action that is chosen for stage $i$ is $u_i \in \mathcal{U}$. The action $u_i$ determines the transmission delay. For example, if $u_i=1$, then the transmission delay at stage $i$ is $d_{1}$. 

\item \textbf{Transition probabilities:} Given the current state $a_i$ and action $u_i$ at stage $i$, the transition probability to the state $a_{i+1}$ at the stage $i+1$ is defined as
\begin{align}
 &P(a_{i+1}=a'|a_i=a, u_i=u)=
  \begin{cases}
  p_{u} & \text{if}\, \, a'=a+d_u, \\
  1-p_{u} & \text{if} \, \, a'=d_u,\\  
  0 & \text{otherwise}.
  \end{cases} 
\end{align}

\item \textbf{Costs:} Given state $a_i$ and action $u_i$ at stage $i$, the cost at the stage is defined as
\begin{align}
	C(a_i,u_i)=(a_i-\beta)d_{u_i}+\frac{1}{2}d_{u_i}^2.\label{inst_cost}
\end{align}

\end{itemize}

Given $\beta$, the average cost per stage under a transmission selection policy $\pi$ is given by
\begin{align}
	&J(\pi,\beta)\triangleq\limsup_{n \rightarrow \infty} \frac{1}{n}\mathbb{E_\pi}\left[\sum_{i=0}^{n-1} C(a_i,u_i)\right].
	\label{avg_cost}
\end{align}
Our objective is to find a transmission selection policy $\pi\in\Pi$ that minimizes the average cost per stage, which can be formulated as 

\emph{Problem 1 (Average cost MDP)}
\begin{align}
	\min_{\pi\in\Pi} J(\pi,\beta).
		\label{avg_prob}
\end{align}
We say that a transmission selection policy $\pi$ is \emph{average-optimal} if it solves the problem in \eqref{avg_prob}. Our goal is to find the average-optimal policy. A transmission selection policy is a sequence of decision rules, i.e., $\pi=(\zeta_1,\zeta_2,\cdots)$, where a decision rule $\zeta_i$ maps the history of states and actions, and the current state to an action.
A transmission selection policy is called a stationary deterministic policy if $u_i\!=\!\zeta(a_i)$ for all $i\in\mathbb{N}$, where $\zeta: \!\mathcal{S}\!\rightarrow \mathcal{U}$ is a deterministic function. 
Stationary deterministic policies are the easiest to be implemented and evaluated. However, there may not exist a stationary deterministic policy that is average-optimal \cite{bertsekas1995dynamic}. Next, we show that there exists a stationary deterministic transmission selection policy that is average-optimal. Moreover, we show that the optimal policy is of a threshold-type.

\section{Structure of Average-Optimal Policy and Algorithm Design}
\label{structure}
In this section, we investigate the structure of average-optimal policy that minimizes the average cost in \eqref{avg_cost} and propose an efficient algorithm.
\subsection{Threshold Structure of Average-Optimal Policy}
\label{struc}
\subsubsection{Threshold structure:}
The following theorem states that there exists a threshold-type stationary deterministic policy that is average-optimal. In particular, the problem is divided into two cases based on the relation between $d_1(1-p_2)$ and $d_2(1-p_1)$. Under these two cases, the threshold-type average-optimal policy shows opposite behaviors. 
\begin{theorem}
\label{THRE_PROP}
There exists a stationary deterministic average-optimal transmission selection policy that is of threshold-type. Specifically,

\noindent(i) If $d_1(1-p_2)\leq d_2(1-p_1)$, then \eqref{avg_cost} can be minimized by the policy of the form $\pi^*=(\zeta^*,\zeta^*,\cdots)$, where
\begin{align}
\label{type2}
 &\zeta^*(a)=
  \begin{cases}
  2 & \text{if}\, \, 0\leq a\leq a_1^*, \\
  1 & \text{if} \, \,a_1^*< a,
  \end{cases} 
\end{align}
where $a_1^*$ denotes the age threshold.

\noindent(ii) If $d_1(1-p_2)\geq d_2(1-p_1)$, then \eqref{avg_cost} can be minimized by the policy of the form $\pi^*=(\zeta^*,\zeta^*,\cdots)$, where
\begin{align}
\label{type1}
 &\zeta^*(a)=
  \begin{cases}
  1 & \text{if}\, \, 0\leq a\leq a_2^*, \\
  2 & \text{if} \, \,a_2^*< a,
  \end{cases} 
\end{align}
where $a_2^*$ denotes the age threshold.

\end{theorem}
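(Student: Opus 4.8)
The plan is to establish the threshold structure via the standard machinery for average-cost MDPs: first pass to a discounted problem, prove a monotonicity property of the discounted value function, take the vanishing-discount limit to recover an average-optimal stationary deterministic policy, and then read off the threshold structure from the sign of a single difference. Concretely, for discount factor $\alpha\in(0,1)$ I would consider the discounted cost $V_\alpha(a)$ and its Bellman equation
\begin{align}
V_\alpha(a)=\min_{u\in\mathcal{U}}\Big\{(a-\beta)d_u+\tfrac12 d_u^2+\alpha\big(p_u V_\alpha(a+d_u)+(1-p_u)V_\alpha(d_u)\big)\Big\}.
\end{align}
Since the state space $\mathcal{S}$ is countable, the per-stage cost has at most linear growth in $a$, and the age resets to $d_1$ or $d_2$ with positive probability under either action, the usual conditions (e.g., Sennott-type or the conditions in Bertsekas) for existence of a stationary average-optimal policy and for the vanishing-discount argument hold; I would verify the relevant one (bounded differences $V_\alpha(a)-V_\alpha(d_2)$ uniformly in $\alpha$) using a coupling/renewal bound on the time to the first successful transmission.

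The key structural step is to prove by value iteration that $V_\alpha$ is \emph{non-decreasing and convex} in $a$ on $\mathcal{S}$ — here convexity is in the sense of the restricted lattice, i.e. $V_\alpha(a+d)-V_\alpha(a)$ is non-decreasing in $a$ for each fixed increment. Starting from $V^{(0)}\equiv 0$, I would show the value-iteration operator preserves the class of non-decreasing convex functions: the stage cost $(a-\beta)d_u+\frac12 d_u^2$ is affine (hence convex non-decreasing) in $a$, and the continuation term $p_u V(a+d_u)+(1-p_u)V(d_u)$ is a convex non-decreasing function of $a$ plus a constant; finally the pointwise minimum over the two actions of convex functions need not be convex in general, but here the two action-value functions $Q_u(a)$ differ in slope in a controlled way, and I would instead track the single quantity $g(a)\triangleq Q_1(a)-Q_2(a)$ and show it is monotone in $a$. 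This monotonicity of $g$ is exactly what yields a threshold: the optimal action switches at most once as $a$ increases.

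The sign of the slope of $g(a)$ — and hence the \emph{direction} of the switch — is governed by the comparison between $d_1(1-p_2)$ and $d_2(1-p_1)$, which is the heart of the dichotomy in parts (i) and (ii). Writing out $g(a)=(a-\beta)(d_1-d_2)+\tfrac12(d_1^2-d_2^2)+\alpha\big[p_1 V_\alpha(a+d_1)-p_2 V_\alpha(a+d_2)\big]+\alpha\big[(1-p_1)V_\alpha(d_1)-(1-p_2)V_\alpha(d_2)\big]$, I would differentiate (take first differences in $a$) and use convexity of $V_\alpha$ to bound the bracketed term; the term $(1-p_1)V_\alpha(d_1)-(1-p_2)V_\alpha(d_2)$ and the way it interacts with the reset structure is where $d_1(1-p_2)$ versus $d_2(1-p_1)$ enters, since these are (up to normalization) the expected number of "useful" time units contributed per stage. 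When $d_1(1-p_2)\le d_2(1-p_1)$, $g$ turns out to be non-increasing — so for small $a$ the faster action $u=2$ is preferred and for large $a$ the reliable action $u=1$ is preferred, giving \eqref{type2}; the reversed inequality flips the monotonicity of $g$ and gives \eqref{type1}. Passing $\alpha\uparrow 1$ preserves monotonicity of the limiting differential cost $h$ (a limit of non-increasing/non-decreasing functions is non-increasing/non-decreasing), and the average-cost Bellman optimality equation then yields a stationary deterministic $\zeta^*$ of the stated threshold form.

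\textbf{Main obstacle.} The delicate point is that the state space is the non-uniform lattice $\{ld_1+vd_2\}$ rather than $\mathbb{Z}$, so "convexity" must be interpreted carefully and value iteration must be shown to preserve it on this lattice — in particular the argument that $V_\alpha(a+d_1)$ and $V_\alpha(a+d_2)$ can be compared slope-wise needs the convexity to be genuinely along all admissible increments, not just the two generating ones. Establishing that this lattice-convexity class is closed under the value-iteration operator (especially through the min over the two actions, handled via the monotone-difference-function $g$), together with the uniform boundedness of $V_\alpha(a)-V_\alpha(d_2)$ needed for the vanishing-discount limit, is where the real work lies; the identification of the threshold direction from $d_1(1-p_2)$ vs.\ $d_2(1-p_1)$ is then a relatively short computation once convexity is in hand.
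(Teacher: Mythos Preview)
Your high-level architecture matches the paper's: pass to the $\alpha$-discounted problem, show the optimal action is monotone in $a$ by controlling $g(a)=Q^\alpha(a,1)-Q^\alpha(a,2)$, and then take the vanishing-discount limit via Sennott-type conditions. Where your proposal diverges---and where there is a real gap---is in the mechanism you use to obtain monotonicity of $g$. Convexity of $V_\alpha$ is neither established nor sufficient. It is not established because, as you note, the pointwise minimum of two convex functions need not be convex, and you do not actually say how tracking $g$ repairs this for the value-iteration operator starting from $V^{(0)}\equiv 0$. It is not sufficient because the first difference of $g$ in $a$ is (schematically) $(d_1-d_2)+\alpha\big[p_1\,\Delta V_\alpha(\cdot+d_1)-p_2\,\Delta V_\alpha(\cdot+d_2)\big]$; convexity gives $\Delta V_\alpha(\cdot+d_1)\ge \Delta V_\alpha(\cdot+d_2)$, but since $p_1<p_2$ this does not pin down the sign. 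What is really needed is a \emph{slope lower bound} on $V_\alpha$, not merely convexity. Relatedly, the place you point to for the dichotomy---the reset term $(1-p_1)V_\alpha(d_1)-(1-p_2)V_\alpha(d_2)$---is a constant in $a$ and disappears when you difference $g$, so it cannot be where the condition enters.

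The paper avoids convexity entirely. It initializes value iteration not at $0$ but at the critical linear function $V_0^\alpha(a)=\tfrac{d_1-d_2}{\alpha(p_2-p_1)}\,a$, and then proves by induction the supermodularity-type inequality
\[
(a_2-a_1)(d_1-d_2)+\alpha p_1\big[V_n^\alpha(a_2+d_1)-V_n^\alpha(a_1+d_1)\big]+\alpha p_2\big[V_n^\alpha(a_1+d_2)-V_n^\alpha(a_2+d_2)\big]\ge 0
\]
for $a_1\ge a_2$ (and the reversed inequality in the other regime). The base case holds with equality precisely because of the chosen $V_0^\alpha$. The inductive step adds and subtracts $Q$-values with suboptimal actions so that, after applying the induction hypothesis with the action $u=2$ on one side and $u=1$ on the other, the $V_n^\alpha$ terms are evaluated at the \emph{same} shifted states $a_i+d_1+d_2$ and cancel, leaving $(a_2-a_1)\big[(1-\alpha p_2)d_1-(1-\alpha p_1)d_2\big]$. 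Thus the threshold direction for the discounted problem is governed by $(1-\alpha p_2)d_1$ versus $(1-\alpha p_1)d_2$, and only in the limit $\alpha\uparrow 1$ does this become the stated comparison $d_1(1-p_2)$ versus $d_2(1-p_1)$. Your lattice-convexity program could perhaps be salvaged, but as written the slope control is missing and the choice $V^{(0)}\equiv 0$ will not give you a usable base case.
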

\begin{proof}
	Please see Section \ref{proof}.
\end{proof}

Define the mean delay of transmission $j\in\mathcal{E}$ as 
\begin{align}
	\bar{d}_j\triangleq\frac{d_j}{1-p_j}.
	\label{mean_delay}
\end{align}
By Theorem \ref{THRE_PROP} (i), when the age exceeds a certain threshold, the optimal policy chooses the transmission with smaller mean delay. This result reveals an interesting phenomenon: While the transmission with minimum mean delay is the optimal decision for minimizing the average delay, this does not necessarily minimize the age.  
In particular, when the age is below a certain threshold, the average age is reduced by choosing a faster transmission that has a higher mean delay (i.e., a higher error probability). The reason is that if successful, the age remains low. If it fails, it provides an opportunity to generate a later packet that can be transmitted in a shorter period of time.
In Section \ref{alg}, based on Theorem \ref{THRE_PROP} (ii), we will show that the optimal policy under $d_1(1-p_2)\geq d_2(1-p_1)$ is to choose transmission rate $2$ for each transmission opportunity. This is reasonable because both the delay and mean delay (including the impact of the error probability) of transmission rate $2$ is shorter than that of transmission rate $1$.

\subsubsection{Proof of Theorem \ref{THRE_PROP}}
\label{proof}
One way to investigate the average cost MDPs is to relate them to the discounted cost MDPs. To prove Theorem \ref{THRE_PROP}, we (i) address a discounted cost MDP, i.e., establish the existence of a stationary deterministic policy that solves the MDP and then study the structure of the optimal policy; and (ii) extend the results to the average cost MDP in \eqref{avg_prob}.

Given an initial state $a$, the total expected $\alpha$-discounted cost under a transmission selection policy $\pi\in \Pi$ is given by 
\begin{align}
	V^\alpha(a;\pi)=\limsup_{n\rightarrow\infty}\mathbb{E}\left[\sum_{i=0}^{n-1}\alpha^i C(a_i,u_i)\right],
	\label{disc_cost}
\end{align}
where $0<\alpha<1$ is the discount factor. Then, the optimization problem of minimizing the total expected $\alpha$-discounted cost can be cast as

\emph{Problem 2 (Discounted cost MDP)} 
\begin{align}
	V^\alpha(a)\triangleq\min_{\pi} V^\alpha(a;\pi),
	\label{disc_prob}
\end{align} 
where $V^\alpha(a)$ denotes the optimal total expected $\alpha$-discounted cost. A transmission selection policy is said to be \emph{$\alpha$-discounted cost optimal} if it solves the problem in \eqref{disc_prob}. In Proposition \ref{DISC_EXIST}, we show that there exists a stationary deterministic transmission selection policy which is $\alpha$-discounted cost optimal and provide a way to explore the property of the optimal policy.

\begin{proposition}
\label{DISC_EXIST}
(a) The optimal total expected $\alpha$-discounted cost $V^\alpha$ satisfies the following optimality equation:
\begin{align}
	V^\alpha(a)=\min_{u\in \mathcal{U}} Q^\alpha(a, u),
	\label{bellman_equ}
\end{align}
where 
\begin{align}
	Q^\alpha(a, u)=C(a,u)+\alpha p_uV^\alpha(a+d_u)+\alpha(1-p_u)V^\alpha(d_u).
	\label{qvalue}
\end{align}

(b) The stationary deterministic policy determined by the right-hand-side of \eqref{bellman_equ} is $\alpha$-discounted cost optimal.

(c) Let $V_n^\alpha(a)$ be the cost-to-go function such that $V_0^\alpha(a)=\frac{d_1-d_2}{\alpha(p_2-p_1)}a$ and for $n\geq0$
\begin{align}
	V_{n+1}^\alpha(a)=\min_{u\in \mathcal{E}} Q_{n+1}^\alpha(a, u),
	\label{value_iter}
\end{align}
where
\begin{align}
	Q_{n+1}^\alpha(a, u)=C(a, u)+\alpha p_uV_n^\alpha(a+d_u)+\alpha(1-p_u)V_n^\alpha(d_u).
\end{align}
Then, we have that for each $\alpha$, $V_{n}^\alpha(a)\rightarrow V^\alpha(a)$ as $n\rightarrow \infty$.
\end{proposition}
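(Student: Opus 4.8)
The plan is to treat Problem 2 as a discounted-cost MDP on the countable state space $\mathcal{S}$ with the finite action set $\mathcal{U}$, and to absorb the unboundedness of $\mathcal{S}$ and of the one-stage cost $C(a,u)=(a-\beta)d_u+\tfrac12 d_u^2$ by working in a weighted supremum norm. Fix $\alpha\in(0,1)$ and pick a weight $w(a):=M+a$ with $M$ large enough that $\lambda:=1+d_1/M$ satisfies $\alpha\lambda<1$ (e.g.\ any $M>\alpha d_1/(1-\alpha)$). Two elementary bounds drive everything: since $a\ge 0$ on $\mathcal{S}$, $|C(a,u)|\le d_1 a+(|\beta|d_1+\tfrac12 d_1^2)\le K\,w(a)$ for a finite $K$; and a one-line computation from the transition law gives $\mathbb{E}[w(a_{i+1})\mid a_i=a,u_i=u]=M+d_u+p_u a\le \lambda\,w(a)$ for both $u$. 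Let $\mathcal{F}_w$ be the Banach space of functions with $\|f\|_w:=\sup_a|f(a)|/w(a)<\infty$ and let $T^\alpha$ be the operator $f\mapsto\min_u\{C(\cdot,u)+\alpha\,\mathbb{E}[f(a_{i+1})\mid a_i=\cdot,u_i=u]\}$, so that $V_{n+1}^\alpha=T^\alpha V_n^\alpha$ and \eqref{bellman_equ} reads $V^\alpha=T^\alpha V^\alpha$.

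The two bounds immediately imply that $T^\alpha$ maps $\mathcal{F}_w$ into itself and is a contraction there with modulus $\alpha\lambda<1$: for $f,g\in\mathcal{F}_w$, $|T^\alpha f(a)-T^\alpha g(a)|\le\alpha\max_u\mathbb{E}[|f-g|(a_{i+1})\mid a,u]\le\alpha\|f-g\|_w\max_u\mathbb{E}[w(a_{i+1})\mid a,u]\le\alpha\lambda\|f-g\|_w\,w(a)$. By Banach's fixed-point theorem $T^\alpha$ has a unique fixed point $\tilde V\in\mathcal{F}_w$ and $(T^\alpha)^n f\to\tilde V$ in $\|\cdot\|_w$, hence pointwise, for every $f\in\mathcal{F}_w$. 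It remains to identify $\tilde V$ with $V^\alpha$. Iterating the drift bound gives $\mathbb{E}_\pi[w(a_i)]\le\lambda^i w(a)$ for every policy $\pi$, so $\sum_{i\ge n}\alpha^i\mathbb{E}_\pi[|C(a_i,u_i)|]\le K\,w(a)(\alpha\lambda)^n/(1-\alpha\lambda)\to0$; in particular $V^\alpha(a;\pi)$ is an absolutely convergent (hence genuine) limit, and since $(T^\alpha)^n 0$ is the optimal $n$-stage cost with zero terminal cost it satisfies $(T^\alpha)^n 0(a)\le\mathbb{E}_\pi[\sum_{i=0}^{n-1}\alpha^i C(a_i,u_i)]\to V^\alpha(a;\pi)$, so letting $n\to\infty$ and minimizing over $\pi$ yields $\tilde V\le V^\alpha$. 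Conversely, because $\mathcal{U}$ is finite the minimum defining $T^\alpha\tilde V$ is attained by some stationary rule $\zeta^*$, so $\tilde V$ is a fixed point of the policy operator $T^\alpha_{\zeta^*}$, which is again an $\alpha\lambda$-contraction on $\mathcal{F}_w$ whose unique fixed point is the value of the stationary policy $(\zeta^*,\zeta^*,\dots)$ (this value lies in $\mathcal{F}_w$ by the same cost/drift bounds). Hence $\tilde V=V^\alpha(\cdot\,;(\zeta^*,\dots))\ge V^\alpha$, so $\tilde V=V^\alpha$; this proves (a), and it simultaneously shows the greedy stationary policy read off from \eqref{bellman_equ} attains $V^\alpha$, which is (b).

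For (c), note the prescribed $V_0^\alpha(a)=\frac{d_1-d_2}{\alpha(p_2-p_1)}a$ is well defined and nonnegative (the coefficient is finite and positive since $d_1>d_2$ and $p_2>p_1$) and has at most linear growth, hence $V_0^\alpha\in\mathcal{F}_w$; by the convergence statement above, $V_n^\alpha=(T^\alpha)^n V_0^\alpha\to\tilde V=V^\alpha$ pointwise, which is precisely (c). I expect the only real obstacle to be technical: choosing the $\alpha$-dependent weight $w$ so that the multiplicative drift constant $\lambda$ still satisfies $\alpha\lambda<1$ for the given $\alpha$ (a drift constant $\le1$ is impossible here because $d_1>0$), and then carefully justifying the identification $\tilde V=V^\alpha$, in particular the tail estimate that passes from finite- to infinite-horizon values. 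Monotonicity and continuity/measurability issues are absent since $\mathcal{S}$ is discrete and $\mathcal{U}$ is finite, so the minimum is always attained.
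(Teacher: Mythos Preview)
Your proposal is correct and follows the same overall strategy as the paper's proof: handle the unbounded state space and unbounded one-stage cost by working in a weighted supremum norm with a linear-growth weight, then obtain the optimality equation and value-iteration convergence via a contraction argument. The differences are in packaging. The paper picks the \emph{fixed} weight $w(a)=\max(ad_1+\tfrac12 d_1^2,1)$ and, rather than proving a one-step contraction, verifies the hypotheses of Theorem~6.10.4 in Puterman (bounded cost relative to $w$, a one-step drift $\sum_{a'}P(a'|a,u)w(a')\le Lw(a)$, and an $M$-step contraction $\alpha^M\sum_{a'}P_\pi^M(a'|a)w(a')\le\gamma w(a)$ with $\gamma<1$ for $M$ large). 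You instead choose an $\alpha$-dependent weight $w(a)=M+a$ with $M>\alpha d_1/(1-\alpha)$ so that the Bellman operator is already a one-step contraction with modulus $\alpha\lambda<1$, and then run the Banach fixed-point argument yourself, including the tail estimate that identifies the fixed point with $V^\alpha$. Your route is more self-contained and arguably cleaner; the paper's route is shorter on the page because it outsources the fixed-point/identification step to the cited theorem, and its $\alpha$-independent weight is convenient when later letting $\alpha\uparrow 1$ for the average-cost analysis. Either way the substance is the same.
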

 \ifreport
\begin{proof}
See Appendix \ref{proposition}.
\end{proof}
\else 
\begin{proof}
See our technical report \cite{yao2020age}.
\end{proof}
\fi

Next, with the optimality equation \eqref{bellman_equ} and value iteration \eqref{value_iter}, we show that the optimal policy is of threshold-type in Lemma \ref{THRESHOLD}.

\begin{lemma} \label{THRESHOLD}
Given a discount factor $\alpha$,

(i) if $(1-\alpha p_2)d_1\leq (1-\alpha p_1)d_2$, then the $\alpha$-discounted cost optimal policy is of threshold-type, i.e., the optimal decision is non-increasing in age.

(ii) if $(1-\alpha p_2)d_1\geq (1-\alpha p_1)d_2$, then the $\alpha$-discounted cost optimal policy is of threshold-type, i.e., the optimal decision is non-decreasing in age. 

\end{lemma}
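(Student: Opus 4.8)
The plan is to prove Lemma~\ref{THRESHOLD} by induction on the value iteration in Proposition~\ref{DISC_EXIST}(c), showing that each $V_n^\alpha$ retains a structural property that forces the minimizing action to be monotone in the age $a$. The natural candidate property is convexity (or, more precisely, a suitable supermodularity/convexity-type condition on the $Q$-values). Concretely, for case (i) I would prove by induction that $V_n^\alpha(a)$ is non-decreasing and convex in $a$ (on the state space $\mathcal{S}$, or more conveniently on all of $\mathbb{R}_{\geq 0}$ with a piecewise-linear extension). The base case holds since $V_0^\alpha(a)=\frac{d_1-d_2}{\alpha(p_2-p_1)}a$ is linear with nonnegative slope (recall $d_1>d_2$, $p_1<p_2$). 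For the inductive step, $C(a,u)=(a-\beta)d_u+\tfrac12 d_u^2$ is affine in $a$, and $\alpha p_u V_n^\alpha(a+d_u)+\alpha(1-p_u)V_n^\alpha(d_u)$ is convex and non-decreasing in $a$ whenever $V_n^\alpha$ is; since the pointwise minimum of two convex functions need not be convex in general, the real work is to show the minimum here \emph{is} convex, which I expect to follow from the structure of the two $Q$-curves intersecting at most once.

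The heart of the argument is the single-crossing property of $\Delta Q_n^\alpha(a)\triangleq Q_n^\alpha(a,1)-Q_n^\alpha(a,2)$. Writing this difference out, the cost terms contribute $(a-\beta)(d_1-d_2)+\tfrac12(d_1^2-d_2^2)$, which is affine in $a$ with slope $d_1-d_2>0$, and the continuation terms contribute $\alpha p_1 V_n^\alpha(a+d_1)-\alpha p_2 V_n^\alpha(a+d_2)$ plus constants. The key computation is to bound the slope of the continuation difference: since $V_n^\alpha$ is convex and non-decreasing with some slope profile, and using the inductive hypothesis to control $V_n^\alpha{}'(a+d_1)$ versus $V_n^\alpha{}'(a+d_2)$, one shows that under the hypothesis $(1-\alpha p_2)d_1\leq(1-\alpha p_1)d_2$ the function $\Delta Q_n^\alpha(a)$ is itself non-decreasing in $a$ (or at least crosses zero at most once from below). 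That single-crossing immediately yields that the optimal action $\zeta_n(a)=\argmin_u Q_n^\alpha(a,u)$ is non-increasing in $a$ — action $2$ below the crossing point, action $1$ above — which is exactly the threshold structure claimed. Passing $n\to\infty$ preserves convexity and monotonicity of $V^\alpha$ (convexity is closed under pointwise limits), and hence the limiting $\Delta Q^\alpha$ inherits single-crossing, giving the threshold structure for the $\alpha$-discounted optimal policy via Proposition~\ref{DISC_EXIST}(b). Case (ii) is symmetric: under $(1-\alpha p_2)d_1\geq(1-\alpha p_1)d_2$ the inequality on slopes reverses, $\Delta Q^\alpha$ crosses zero at most once from above, and the optimal decision becomes non-decreasing in age.

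The main obstacle I anticipate is making the slope/derivative manipulations rigorous on the discrete, unbounded state space $\mathcal{S}=\{ld_1+vd_2\}$, where $V_n^\alpha$ is only defined at lattice points and "convexity" and "slope" must be handled via finite differences. The continuation term $V_n^\alpha(a+d_u)$ shifts the argument by $d_u$, so I would phrase the inductive hypothesis in terms of the discrete convexity inequality $V_n^\alpha(a+d_1)-V_n^\alpha(a)\le$ (appropriate combination) rather than a derivative, and verify that the affine cost plus the two shifted convex functions still satisfies it. A secondary subtlety is the choice of $V_0^\alpha$: the slope $\frac{d_1-d_2}{\alpha(p_2-p_1)}$ is presumably picked precisely so that the induction closes cleanly (it makes $\Delta Q_0^\alpha$ have the right monotonicity sign), and I would want to check that this particular initialization, rather than $V_0^\alpha\equiv 0$, is what makes the single-crossing property propagate. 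Finally, I would need to confirm that the $\limsup$ in \eqref{disc_cost} is actually a limit (costs grow at most linearly in $a$, $a$ grows at most linearly in the stage index, and the discount factor $\alpha<1$ dominates), so that the convergence $V_n^\alpha\to V^\alpha$ from Proposition~\ref{DISC_EXIST}(c) legitimately transfers the structural property to the optimal value function.
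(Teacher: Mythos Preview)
Your approach contains a genuine gap: convexity of $V_n^\alpha$ does \emph{not} propagate through the Bellman operator here, so the induction you propose cannot close. You correctly flag that the minimum of two convex functions need not be convex and hope that single-crossing rescues it, but it does not. Suppose in case (i) that $\Delta Q_{n+1}^\alpha(a)=Q_{n+1}^\alpha(a,1)-Q_{n+1}^\alpha(a,2)$ is non-increasing, so that $V_{n+1}^\alpha=Q_{n+1}^\alpha(\cdot,2)$ for $a<a^\ast$ and $V_{n+1}^\alpha=Q_{n+1}^\alpha(\cdot,1)$ for $a>a^\ast$. At the switch the left slope is $\partial_a Q_{n+1}^\alpha(a^\ast,2)$ and the right slope is $\partial_a Q_{n+1}^\alpha(a^\ast,1)$; but $\Delta Q_{n+1}^\alpha$ non-increasing means precisely $\partial_a Q_{n+1}^\alpha(a^\ast,1)\le \partial_a Q_{n+1}^\alpha(a^\ast,2)$, so the slope \emph{drops} at the kink and $V_{n+1}^\alpha$ is concave there, not convex. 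Hence the hypothesis ``$V_n^\alpha$ convex'' fails at step $n{+}1$ as soon as a genuine threshold appears, and you have no mechanism left to control $\Delta Q_{n+2}^\alpha$. (Incidentally, there is a sign slip in your writeup: in case (i) you need $\Delta Q$ non-\emph{increasing}, not non-decreasing, to get action $2$ below and action $1$ above.)

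The paper bypasses convexity entirely. It inducts directly on the submodularity-type inequality
\[
(a_2-a_1)(d_1-d_2)+\alpha p_1\bigl[V_n^\alpha(a_2+d_1)-V_n^\alpha(a_1+d_1)\bigr]-\alpha p_2\bigl[V_n^\alpha(a_2+d_2)-V_n^\alpha(a_1+d_2)\bigr]\ge 0
\]
for all $a_1\ge a_2$, which is exactly the statement that $\Delta Q_{n+1}^\alpha$ is non-increasing. The base case holds with equality precisely because $V_0^\alpha$ has slope $\frac{d_1-d_2}{\alpha(p_2-p_1)}$ (your intuition about the role of this initialization is correct). For the inductive step one writes $V_{n+1}^\alpha$ at the four shifted states $a_j+d_k$ using their \emph{optimal} actions $u_1,\dots,u_4$, then adds and subtracts $Q_{n+1}^\alpha$-terms with mismatched actions so that (a) optimality produces nonnegative remainders, and (b) the induction hypothesis, applied at the shifted pairs $(a_1+d_1,a_2+d_1)$ and $(a_1+d_2,a_2+d_2)$, bounds the remaining pieces. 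The whole expression then collapses to $(a_2-a_1)\bigl[(1-\alpha p_2)d_1-(1-\alpha p_1)d_2\bigr]$, which has the required sign under the case hypothesis. No convexity, and no discrete-versus-continuous subtlety, is needed; the argument works on $\mathcal S$ as stated. Your overall architecture (value iteration, then pass to the limit via Proposition~\ref{DISC_EXIST}(c)) is fine; only the invariant being propagated needs to change from ``$V_n^\alpha$ convex'' to the displayed inequality above.
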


\begin{proof}
Please see Appendix \ref{discount_threshold}
\end{proof}

This lemma proves that the $\alpha$-discounted cost optimal policy is of threshold type. Next, we extend the results to average cost MDP and show that there exists a stationary deterministic average-optimal policy which is of threshold-type. Based on the results in \cite{sennott1989average}, we have the following lemma, which provides a candidate for average-optimal policy.
\begin{lemma}
\label{CANDI}
	(i) Let $\alpha_n$ be any sequence of discount factors converging to 1 with $\alpha_n$-discounted cost optimal stationary deterministic policy $\pi^{\alpha_n}$. There exists a subsequence $\gamma_n$ and a stationary policy $\pi^\star$ that is a limit point of $\pi^{\gamma_n}$. 
	
	(ii) If $d_1(1-p_2)\leq d_2(1-p_1)$, $\pi^\star$ is of threshold-type in \eqref{type2}; if $d_1(1-p_2)\geq d_2(1-p_1)$, $\pi^\star$ is of threshold-type in \eqref{type1}.
\end{lemma}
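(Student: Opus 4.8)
The plan is to prove (i) by a compactness/diagonalization argument over the countable state space, and (ii) by combining Lemma \ref{THRESHOLD} with an elementary observation about the sign of $(1-\alpha p_1)d_2-(1-\alpha p_2)d_1$ as $\alpha\to 1$, together with the fact that a pointwise limit of monotone $\mathcal{U}$-valued functions is again monotone.

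\emph{Part (i).} Each $\pi^{\alpha_n}$ is a map $\mathcal{S}\to\mathcal{U}=\{1,2\}$ and $\mathcal{S}$ is countable, so the function space $\mathcal{U}^{\mathcal S}$ is sequentially compact in the product topology. Concretely, I would enumerate $\mathcal S=\{s_1,s_2,\dots\}$, repeatedly pass to subsequences on which $\pi(s_1),\pi(s_2),\dots$ successively stabilize, and take the diagonal subsequence $\gamma_n$; the pointwise limit $\pi^\star(s):=\lim_n\pi^{\gamma_n}(s)$ is then a well-defined stationary deterministic policy, and since $\mathcal U$ is discrete we have $\pi^{\gamma_n}(s)=\pi^\star(s)$ for all $n$ large enough (depending on $s$). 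In particular $\pi^\star$ is a limit point of $(\pi^{\gamma_n})$. This is precisely the limit-point step in Sennott's framework \cite{sennott1989average}, which I would simply cite.

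\emph{Part (ii).} Let $f(\alpha):=(1-\alpha p_1)d_2-(1-\alpha p_2)d_1$, which is affine in $\alpha$ with $f(0)=d_2-d_1<0$ (since $d_1>d_2$) and $f(1)=d_2(1-p_1)-d_1(1-p_2)$. If $d_1(1-p_2)<d_2(1-p_1)$, i.e. $f(1)>0$, then $f$ has a unique root $\alpha_0\in(0,1)$ and $f(\alpha)>0$ for $\alpha\in(\alpha_0,1]$; since $\gamma_n\to 1$ we get $(1-\gamma_n p_2)d_1<(1-\gamma_n p_1)d_2$ for all large $n$, so Lemma \ref{THRESHOLD}(i) makes $\pi^{\gamma_n}$ non-increasing in $a$, i.e. of the form \eqref{type2}, for all large $n$. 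Taking the pointwise limit preserves non-increasingness, and a non-increasing $\{1,2\}$-valued function on $\mathcal S$ is exactly a policy of the form \eqref{type2} (with $a_1^\star$ the last state at which the value is $2$, the degenerate thresholds corresponding to a constant policy). Symmetrically, if $d_1(1-p_2)>d_2(1-p_1)$ then $f(1)<0$, hence by affineness and $f(0)<0$ we have $f(\alpha)<0$ on all of $[0,1)$, so $(1-\gamma_n p_2)d_1>(1-\gamma_n p_1)d_2$ for every $n$; Lemma \ref{THRESHOLD}(ii) makes each $\pi^{\gamma_n}$ non-decreasing in $a$, and the limit is then of the form \eqref{type1}.

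\emph{The knife-edge and the main obstacle.} The only delicate point is the boundary case $d_1(1-p_2)=d_2(1-p_1)$ (equal mean delays $\bar d_1=\bar d_2$), where the lemma asserts that $\pi^\star$ has simultaneously \emph{both} forms. Here $f(1)=0$ but still $f(\alpha)<0$ for $\alpha<1$, so the argument above only yields that $\pi^\star$ is non-decreasing (form \eqref{type1}); to also obtain form \eqref{type2} one must show $\pi^\star$ is in fact constant, which I would do by invoking the computation of Section \ref{alg} (valid under $d_1(1-p_2)\ge d_2(1-p_1)$) showing that the policy always selecting rate $2$ is average-optimal, a constant policy being trivially of both forms. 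Overall the real work is just bookkeeping the two monotonicity regimes and their $\alpha\to 1$ limits; parts (i) and (ii) are otherwise routine once Lemma \ref{THRESHOLD} and Proposition \ref{DISC_EXIST} are available.
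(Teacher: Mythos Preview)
Your approach is essentially the paper's: part (i) is cited directly from Sennott \cite{sennott1989average} (your explicit diagonalization is precisely the content of that lemma), and part (ii) is the same affine sign analysis, the paper writing the root $\tfrac{d_1-d_2}{d_1p_2-d_2p_1}$ where you write $f(\alpha)$ and then invoking Lemma \ref{THRESHOLD} on the tail of the subsequence.

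On the knife-edge $d_1(1-p_2)=d_2(1-p_1)$ you are in fact more careful than the paper, whose Case~1 parenthetical ``that is, $\tfrac{d_1-d_2}{d_1p_2-d_2p_1}<1$'' is incorrect at equality and whose argument there would require $\gamma_n>1$. However, your proposed fix via Section~\ref{alg} is problematic: Theorem~\ref{CONVEXITY}(ii) is stated for the reformulated problem \eqref{target1}, which presupposes Theorem~\ref{THRE_PROP} and hence this very lemma; and even granting the underlying stationary-distribution computation, it identifies the average-optimal threshold policy, not the specific limit point $\pi^\star$ (which has not yet been shown average-optimal at this stage). The simplest resolution is to note that at equality your argument already yields form \eqref{type1}, which is all Theorem~\ref{THRE_PROP}(ii) needs; the overlapping $\leq/\geq$ statements in the lemma are then harmless redundancy rather than an additional claim to be proved.
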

\ifreport
\begin{proof}
See Appendix \ref{candidate_policy}.
\end{proof}
\else 
\begin{proof}
See our technical report \cite{yao2020age}.
\end{proof}
\fi

\ifreport
By \cite{sennott1989average}, under certain conditions (A proof of these conditions verification is provided in Appendix \ref{app1}), $\pi^\star$ is average-optimal.  
\else 
By \cite{sennott1989average}, under certain conditions (A proof of these conditions verification is provided in our technical report \cite{yao2020age}), $\pi^\star$ is average-optimal.
\fi

\subsection{Algorithm Design}
\label{alg}
Recall that if $p(\beta)=0$, then the optimal transmission selection policies that solve \eqref{equiv1} and \eqref{equiv2} are identical. Given $\beta$, the optimal policy that solves \eqref{equiv2} is of threshold-type by Theorem \ref{THRE_PROP} and then \eqref{equiv2} can be re-expressed as
\begin{numcases}{p(\beta)\triangleq}
\min_{\pi \in \Pi_1} J(\pi,\beta), & if $d_1(1-p_2)< d_2(1-p_1)$\label{pBeta2}\\
\min_{\pi \in \Pi_2} J(\pi,\beta), & if $d_1(1-p_2)\geq d_2(1-p_1)$\label{pBeta1}
\end{numcases}
where $\Pi_1$ and $\Pi_2$ denote the sets of threshold-type policies in \eqref{type2} and \eqref{type1}, respectively. 
Thus, the optimal policy that solves \eqref{equiv1} can be obtained with two steps: 
\begin{itemize}
	\item \textbf{Step (i):} For each $\beta$, find the $\beta$-associated optimal policy $\pi^*_\beta$ such that $p(\beta)=J(\pi^*_\beta,\beta)$. 	
\item \textbf{Step (ii):} Find $\beta^*$ such that $p(\beta^*)=0$. This implies $\pi^*_{\beta^*}$ solves \eqref{equiv1}. 
\end{itemize}
To narrow our searching range in (ii), in Lemma \ref{BETA_RANGE}, we provide a lower bound $\beta_\text{min}$ and an upper bound $\beta_\text{max}$ of $\beta^*$. Then, for (i), we only need to pay attention to $p(\beta)$ for $\beta\in[\beta_\text{min},\beta_\text{max}]$. 

In particular, within the range of $\beta$, we show that $J$ in \eqref{pBeta2} is \emph{quasi-convex} in a threshold related variable, which enables us to devise a \emph{low-complexity} algorithm based on golden section search. Moreover, we show that $\pi^*_\beta$ that solves \eqref{pBeta1} always chooses $u=2$, which allows us to get the optimal policy for \eqref{equiv1} directly. 
 
\begin{lemma}
\label{BETA_RANGE}
	The parameter $\beta^*$ is lower bounded by $\beta_\text{min}\triangleq 1.5d_2$ and upper bounded by $\beta_\text{max}\triangleq\min\Big\{(\frac{1}{1-p_1}+0.5)d_1,(\frac{1}{1-p_2}+0.5)d_2\Big\}$.
\end{lemma}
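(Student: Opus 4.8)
The plan is to bound $\beta^* = \bar\Delta^*$ directly, using the characterization from Lemma~\ref{TRANS} that $\beta^* = \bar\Delta^*$ and the definition of the total average age. For the \emph{lower bound}, I would argue that any feasible policy incurs an average age of at least $1.5 d_2$. Intuitively, even if every packet were delivered on its first attempt using the fast transmission (the most optimistic scenario for age), each inter-delivery interval has length $d_2$, and over such an interval the age contributes an integral of at least $\tfrac12 d_2^2$ plus the carried-over age $a_i \cdot d_2 \ge d_2 \cdot d_2$ (since after a successful delivery the age is $d_2 > 0$, and in general $a_i \ge d_2$ once the state is in $\mathcal{S}$ and bounded below by the most recent delay). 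Concretely, using the renewal-type expression \eqref{total_avg_age}, $\bar\Delta(\pi) = \limsup_n \frac{\sum \mathbb{E}[a_i Y_{i+1} + \tfrac12 Y_{i+1}^2]}{\sum \mathbb{E}[Y_{i+1}]}$; bounding $a_i \ge Y_{i+1} \wedge \ldots$ is slightly delicate, so the cleaner route is to note $a_i \ge d_2$ always (age at any delivery epoch is at least the smallest possible delay) and $Y_{i+1} \ge d_2$, $Y_{i+1}^2 \ge d_2 Y_{i+1}$, giving numerator $\ge \sum \mathbb{E}[d_2 Y_{i+1} + \tfrac12 d_2 Y_{i+1}] = 1.5 d_2 \sum \mathbb{E}[Y_{i+1}]$, hence $\bar\Delta(\pi) \ge 1.5 d_2$ for every $\pi$, so $\beta^* \ge 1.5 d_2$.

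For the \emph{upper bound}, I would exhibit two simple stationary policies and take the better of the two bounds they yield. Consider the policy $\pi_j$ that always selects transmission rate $j$ for every opportunity. Under $\pi_j$, successive successful deliveries occur at renewal epochs; between two consecutive \emph{successful} deliveries there is a geometric number of attempts, each of duration $d_j$, so the inter-renewal time has mean $\bar d_j = d_j/(1-p_j)$. A standard age-renewal computation gives $\bar\Delta(\pi_j) = \mathbb{E}[X_j] + \frac{\mathbb{E}[X_j^2]}{2\,\mathbb{E}[X_j]}$ where $X_j$ is the inter-renewal (successful-delivery) interval — but actually the relevant quantity here is subtler because the age is reset to $Y$ (the last delay) not to $0$. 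Carrying the bookkeeping through \eqref{total_avg_age} with all $Y$'s deterministic equal to $d_j$ but with a geometric count: writing $N_j$ for the number of stages per successful renewal ($\mathbb{E}[N_j] = 1/(1-p_j)$), the numerator over one renewal is $\mathbb{E}\big[\sum_{k=1}^{N_j}(a_k d_j + \tfrac12 d_j^2)\big]$ and the denominator is $d_j \mathbb{E}[N_j]$. Bounding $a_k$ by the age just before the renewal, which is at most $\bar d_j + d_j$ in expectation-type terms, yields $\bar\Delta(\pi_j) \le \big(\tfrac{1}{1-p_j} + 0.5\big) d_j$ after simplification. Taking $j \in \{1,2\}$ and the minimum gives $\beta^* = \bar\Delta^* \le \min\{\pi_1,\pi_2 \text{ bounds}\} = \beta_\text{max}$.

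The main obstacle is the second-moment / carried-over-age bookkeeping in the upper-bound computation: because the age at a delivery epoch is the \emph{last} transmission delay rather than zero, the per-renewal age integral is not the textbook $\mathbb{E}[X^2]/2$ renewal-reward expression, and one must carefully sum $a_k$ over the geometric block, tracking that $a_k = a_0 + k d_j$ within a failure run. I expect this to reduce, after summing the arithmetic-geometric series, to exactly the stated constants $\tfrac{1}{1-p_j}+0.5$; verifying that the cross term and the $\tfrac12 d_j^2$ term combine correctly is the one genuinely computational step. The lower bound, by contrast, is essentially immediate from the crude bounds $a_i \ge d_2$ and $Y_{i+1} \ge d_2$ applied termwise in \eqref{total_avg_age}.
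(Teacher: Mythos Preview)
Your overall strategy coincides with the paper's: the lower bound comes from the ``best conceivable'' error-free fast-rate scenario, and the upper bound comes from evaluating $\bar\Delta$ under the two constant policies $\pi_j$ (always choose rate $j$) and taking the minimum.

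Your lower-bound argument is actually a bit cleaner than the paper's. The paper invokes an ``infeasible error-free policy'' without justifying why it lower-bounds every feasible policy; your termwise bound $a_i\ge d_2$, $Y_{i+1}\ge d_2$, $Y_{i+1}^2\ge d_2 Y_{i+1}$ in \eqref{total_avg_age} does the job directly. (One caveat: $a_i\ge d_2$ is guaranteed only once the chain has entered $\mathcal S\setminus\{0\}$, i.e.\ for $i\ge 1$; the single initial term vanishes in the $\limsup$.)

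For the upper bound your route is correct but you leave the key computation unfinished, and your ``bounding $a_k$'' language is misleading: under $\pi_j$ one gets the \emph{exact} value $\bar\Delta(\pi_j)=(\tfrac{1}{1-p_j}+\tfrac12)d_j$, not merely an inequality. The paper sidesteps the renewal bookkeeping you flag as the obstacle: since all delays equal $d_j$, \eqref{total_avg_age} reduces to $\bar\Delta(\pi_j)=\lim_n\tfrac1n\sum_i\mathbb E[a_i]+\tfrac12 d_j$, and the stationary distribution of the age chain is geometric, $\mathbb P(a=\ell d_j)=(1-p_j)p_j^{\ell-1}$, giving $\mathbb E[a]=d_j/(1-p_j)$ immediately. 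Your renewal-block approach also works (summing $a_k=k d_j$ over a geometric block and using $\mathbb E[N(N+1)]/(2\mathbb E[N])=1/(1-p_j)$), but the stationary-distribution route is shorter and avoids the second-moment computation you were worried about.
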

\ifreport
\begin{proof}
See Appendix \ref{range}.
\end{proof}
\else 
\begin{proof}
See our technical report \cite{yao2020age}.
\end{proof}
\fi

In the following content, we provide a theoretical analysis step by step for our algorithm design in Algorithm \ref{alg1}, which returns the optimal threshold and optimal average age. 

\begin{algorithm}[]
\caption{Threshold-based Age-Optimal Policy}
\label{alg1}
\LinesNumbered
\footnotesize
given $d_1$, $d_2$, $p_1$, $p_2$,$\tau=(\sqrt{5}-1)/2$, tolerance $\epsilon_1$, $\epsilon_2$, $l=\beta_\text{min}$, $r=\beta_\text{max}$\;
\While {$r-l>\epsilon_1$}{	
 $\beta=\frac{r+l}{2}$\;
 \eIf{$d_1(1-p_2)\geq d_2(1-p_1)$}{
 $m=0$, $k=0$, $J^*=J_2(0,0,\beta)$\;}
 {
 $k_\text{max}=\lfloor\frac{d_1}{d_2}\rfloor$, $J^*=f(1,0,\beta)$\;
	\ForEach{ $k_1\in \{0,1,\cdots,k_\text{max}\}$ }{		
		\eIf{$\frac{\partial J_1(y, k_1,\beta)}{\partial y}\vert_{y=1}<0$}{
			$m=0$\;
			}{
		$y_0=0$, $y_1=1$, $y_2=y_1-(y_1-y_0)\tau$, $y_3=y_0+(y_1-y_0)\tau$\;
		\While{$p_2y_1 \geq y_0$ and $y_1-y_0>\epsilon_2$}{
			\eIf{$J_1(y_2,k_1,\beta)>J_1(y_3,k_1,\beta)$}{
			$y_0=y_2$\;
			}{
			$y_1=y_3$\;
			}
			$y_2=y_1-(y_1-y_0)\tau$, $y_3=y_0+(y_1-y_0)\tau$\;
		}
		$t_1=\lfloor\log_{p_2} y_0\rfloor$, $t_2=\lceil\log_{p_2} y_1\rceil$\;
		$m=\argmin_{m\in\{t_1,t_2\}} J_1({p_2}^{m};k,\beta)$\;
		}		
	 \If{$J_1(p_2^{m}, k_1,\beta)\leq J^*$}{
	 $J^*=J_1(p_2^{m}, k_1,\beta)$\;
	}	
	}
	}	
	\eIf{$J^*\geq0$}{
 			$l=\beta$\;}{
 			$r=\beta$\;
		}
		
}
\end{algorithm}

\textbf{Step (i): Find the optimal policy $\pi^*_\beta$:}
Note that both of the threshold-type policies defined in \eqref{type2} and \eqref{type1} result in a Markov chain with a single positive recurrent class. Thus, given threshold $a_1^*$ in \eqref{type2} or $a_2^*$ in \eqref{type1}, we can obtain the expression of average cost under the corresponding threshold-type policy with aid of state transition diagram. With this, we obtain some nice properties in Theorem \ref{CONVEXITY}, which enables us to get a low-complexity algorithm. Before providing the result, we define the integer threshold which will be used in the theorem and algorithm.

Recall that age $a\in \mathcal{S}$ is expressed as the sum of multiple $d_1$'s and $d_2$'s. 
Note that under the threshold-type policy in \eqref{type2}, if $ a< a_1^*$, $\zeta(a)=2$. This implies that if $ a< a_1^*$, $a$ is in the form $a=d_j+l d_2, j\in\mathcal{E}, l\in \mathbb{N}$. Thus, it is sufficient to use the following integer threshold to represent the threshold-type policy in \eqref{type2}.
\begin{align}
	m_1&=\min \left\{l: d_1+ld_2\geq a_1^*,l \in  \mathbb{N}\right\},\label{3}\\
	n_1&=\min \left\{l: d_2+ld_2\geq a_1^*,l \in  \mathbb{N} \right\}\label{4}.
\end{align}
Similarly, the policy in \eqref{type1} can be represented by
\begin{align}
m_2&=\min \left\{l: d_1+ld_1\geq a_2^*, l \in \mathbb{N}\right\},\label{1}\\
	n_2&=\min \left\{l: d_2+ld_1\geq a_2^*, l \in \mathbb{N} \right\},\label{2}.
\end{align}

Based on the analysis, \eqref{pBeta2} and \eqref{pBeta1} can be re-expressed as
\begin{numcases}{p(\beta)\triangleq}
\min_{m_1, k_1} J_1(m_1, k_1,\beta), \! & \!if $d_1(1-p_2)\!\!<\! d_2(\!1-p_1\!)$ \label{target2}\\
\min_{m_2, k_2} J_2(m_2, k_2,\beta),\! & \!if $d_1(1-p_2)\!\!\geq \!d_2(\!1-p_1\!)$ \label{target1}
\end{numcases}
where $k_1=n_1-m_1$ and $k_2=n_2-m_2$. Also, we use $J_1(m_1, k_1,\beta)$ and $J_2(m_2, k_2,\beta)$ to denote the average cost under the policies in \eqref{type2} and \eqref{type1}, respectively. We have $k_2\in \mathcal{K}_2\triangleq\{0,1\}$ and $k_1\in \mathcal{K}_1\triangleq\big\{0,\cdots,\lfloor\frac{d_1}{d_2}\rfloor\big\}$. In particular, according to the definition of $m_2$ and $n_2$, we have $(m_2+1)d_1\geq a_2^*>d_2+(n_2-1)d_1$ and $d_2+n_2d_1\geq a_2^*>m_2d_1$. Substitute $k_2=n_2-m_2$ into these two inequalities, we get $k_2\in \mathcal{K}_2$. Similarly, we have $k_1\in \mathcal{K}_1$.

\begin{table}[]
\renewcommand\arraystretch{2}
\caption{Notations}
\footnotesize
\begin{tabular}{p{0.8mm}l}
\hline\hline
$A_1$&$= p_2^{k_1}\big(d_2(1-p_1)-d_1(1-p_2)\big)\big(d_2(k_1+1)-d_1\big)$ \\
\hline
$B_1$&$= -A_2+p_2^{k_1}(1-p_2)(0.5d_1^2-\beta d_1+\frac{d_1^2}{1-p_1})+(1-p_1)(\beta d_2-0.5d_2^2-\frac{d_2^2}{1-p_2})$ \\
\hline
$C_1$&$= (1-p_1)(0.5d_2^2-\beta d_2+\frac{1}{1-p_2})d_2^2$ \\
\hline
$D_1$&$= \big(d_1(1-p_2)-d_2(1-p_1)\big)d_2p_2^{k_1}$\\
\hline
$A_2$&$= p_1^{k_2}\big(d_1(1-p_2)-d_2(1-p_1)\big)\big(d_1(1-{k_2})-d_2\big)$\\
\hline
$B_2$&$= -p_1^{k_2}(1-p_2)(0.5d_1^2-\beta d_1+\frac{d_1^2}{1-p_1})+(1-p_1)(0.5d_2^2-\beta d_2+\frac{d_2^2}{1-p_2})$\\
	&$  \ \ \  \
	-\big(d_1(1-p_2)-d_2(1-p_1)\big)(d_1-d_2)$\\
\hline
$C_2$&$= (1-p_2)(0.5d_1^2-\beta d_1+\frac{d_1^2}{1-p_1})$ \\
\hline
$D_2$&$= \big(d_2(1-p_1)-d_1(1-p_2)\big)d_1$ \\
\hline\hline
\end{tabular}
\label{notations}
\end{table}

In Theorem \ref{CONVEXITY}, we provide some nice properties for $J_1$ and $J_2$, which enables us to develop a low-complexity algorithm.
Some notations used in this theorem are defined in Table \ref{notations}.

%
%

\begin{theorem}
\label{CONVEXITY} 
Given $\beta\in [\beta_\text{min}, \beta_\text{max}]$,

(i) if $d_1(1-p_2)<d_2(1-p_1)$, then the average cost under the policy in \eqref{type2} is given by 
\begin{equation}
 	J_1(y, k_1,\beta)=\frac{A_1y^{2}+B_1y+C_1+D_1y\log_{p_2}(y)}{1-p_1+(-1+p_1+p_2^{k_1}(1-p_2))y},
 	\label{cost2}
 \end{equation}
where $y\triangleq p_2^{m_1}$. Moreover, $J_1(y, k_1,\beta)$ is quasi-convex in $y$ for $0<y\leq 1$, given $k_1\in\mathcal{K}_1$.

(ii) if $d_1(1-p_2)\geq d_2(1-p_1)$, then optimal average cost is given by
	\begin{equation}
		J_2(0, 0,\beta)=\frac{A_2+B_2+C_2}{1-p_1}.
		\label{cost1}
	\end{equation}
Moreover, the optimal policy for \eqref{target1} chooses $u=2$ at every transmission opportunity.
\end{theorem}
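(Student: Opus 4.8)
The plan is to compute the steady-state average cost for each threshold policy using the state transition diagram, and then analyze the resulting rational function. For part (i), under the policy in \eqref{type2}, states of the form $a = d_j + ld_2$ with $a < a_1^*$ use action $2$, and states $a \ge a_1^*$ use action $1$; since after any successful transmission the age resets to $d_1$ or $d_2$, the chain has a single positive recurrent class. First I would write down the balance equations: starting from the reset states $d_1$ and $d_2$, each failed high-rate transmission adds $d_2$ to the age and occurs with probability $p_2$, so the stationary probabilities along the "$d_1 + l d_2$" and "$d_2 + l d_2$" chains decay geometrically in $p_2$ until the age crosses $a_1^*$, after which a single low-rate transmission (delay $d_1$, failure probability $p_1$) is attempted; on failure the age climbs by $d_1$ with probability $p_1$. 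Summing the geometric series for the stationary distribution, and then computing $\sum \mu(a,u) C(a,u) / \sum \mu(a,u) d_u$ — noting that the MDP cost \eqref{avg_cost} is the per-stage expectation, so the average cost is literally $\sum_a \mu(a)\,C(a,\zeta(a))$ normalized appropriately — yields a ratio of two affine-plus-$y\log_{p_2} y$ expressions in the variable $y = p_2^{m_1}$. Matching coefficients against the definitions of $A_1, B_1, C_1, D_1$ in Table \ref{notations} gives \eqref{cost2}. This is bookkeeping: the only subtlety is correctly handling the boundary term where the "$d_2$-chain" of length $n_1 = m_1 + k_1$ meets the "$d_1$-chain" of length $m_1$, which is exactly where the factor $p_2^{k_1}$ enters.

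For the quasi-convexity claim, I would argue that $J_1(y,k_1,\beta)$ as a function of $y \in (0,1]$ has at most one local minimum in the interior. The cleanest route: a differentiable function on an interval is quasi-convex iff its derivative changes sign at most once from $-$ to $+$. Writing $J_1 = N(y)/M(y)$ with $M(y) = 1-p_1 + (-1+p_1+p_2^{k_1}(1-p_2))y$ affine and positive on $(0,1]$ (the coefficient $-1+p_1+p_2^{k_1}(1-p_2)$ is negative since $p_2^{k_1}(1-p_2) < 1-p_1$ would need checking — actually it is negative because $p_2^{k_1}(1-p_2) \le 1-p_2 < 1-p_1$... wait, $p_1 < p_2$ so $1-p_1 > 1-p_2$, hence $-1+p_1+p_2^{k_1}(1-p_2) < -1+p_1+(1-p_1) = 0$, confirming $M$ is decreasing but still positive at $y=1$: $M(1) = p_2^{k_1}(1-p_2) > 0$). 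Then $\operatorname{sign}(J_1') = \operatorname{sign}(N'M - NM')$, and I would show the numerator $g(y) := N'(y)M(y) - N(y)M'(y)$ has at most one zero on $(0,1]$. Since $N$ contains the term $D_1 y \log_{p_2} y$, $g$ is not a polynomial; I would compute $g'(y)$ and show $g$ is itself monotone or has controlled convexity — e.g., $g'(y)$ is affine plus a $\log_{p_2} y$ term, and one can check its sign, so $g$ is unimodal, hence crosses zero at most... this needs care. The cleanest fallback is to show $g$ is strictly monotone on $(0,1]$ (using $D_1$'s sign, which under $d_1(1-p_2) < d_2(1-p_1)$ makes $D_1 > 0$), giving at most one sign change and thus quasi-convexity.

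For part (ii), when $d_1(1-p_2) \ge d_2(1-p_1)$, Theorem \ref{THRE_PROP}(ii) gives a threshold policy in \eqref{type1} with low rate used below threshold $a_2^*$. I would first derive the general expression $J_2(m_2,k_2,\beta)$ by the same transition-diagram computation — now the sub-threshold chain decays in $p_1$ with delay $d_1$ — and then show it is minimized at $m_2 = k_2 = 0$, i.e., threshold $a_2^* = 0$, meaning action $2$ everywhere; plugging $m_2=k_2=0$ and simplifying yields \eqref{cost1}. The key inequality to establish is $J_2(m_2,k_2,\beta) \ge J_2(0,0,\beta)$ for all valid $(m_2,k_2)$ and all $\beta \in [\beta_\text{min},\beta_\text{max}]$; I expect this reduces, after clearing positive denominators, to an inequality that factors through $d_1(1-p_2) - d_2(1-p_1) \ge 0$ together with $\beta \ge \beta_\text{min} = 1.5 d_2$. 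The main obstacle across the whole theorem is the quasi-convexity argument in part (i): unlike convexity, quasi-convexity does not follow from a one-line second-derivative check, and the $y\log_{p_2} y$ term prevents reducing to polynomial root-counting, so the heart of the proof is a careful sign analysis of $g(y) = N'M - NM'$ (or its derivative) to certify a single $-$-to-$+$ crossing — everything else is steady-state bookkeeping and sign-chasing driven by the hypothesis $d_1(1-p_2) < d_2(1-p_1)$.
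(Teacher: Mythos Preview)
Your overall architecture---steady-state computation via the transition diagram, then sign analysis of $g(y)=N'M-NM'$---matches the paper's proof. However, your quasi-convexity argument in part~(i) has a genuine gap. First, a sign slip: from Table~\ref{notations}, $D_1=\big(d_1(1-p_2)-d_2(1-p_1)\big)d_2p_2^{k_1}$, which is \emph{negative} under the hypothesis $d_1(1-p_2)<d_2(1-p_1)$, not positive. More importantly, your ``cleanest fallback'' of showing $g$ (the paper calls it $h$) is strictly monotone on $(0,1]$ fails in general. The paper computes $h'(y)=G(y)H(y)\,p_2^{k_1}W/y$ with $H>0$, $W>0$, and $G(y)=2(d_2(k_1{+}1)-d_1)y-d_2/\ln p_2$; when $d_2(k_1{+}1)<d_1$ and the zero $y_1$ of $G$ lies in $(0,1)$, $h$ first increases then decreases, so strict monotonicity is unavailable. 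The missing idea is that in this regime one must additionally prove $h(1;k_1,\beta)\ge 0$, which together with $\lim_{y\to 0}h(y)=-\infty$ forces a single $-$-to-$+$ crossing. That boundary inequality is where the upper bound $\beta\le\beta_{\max}$ enters---the paper bounds $h(1;k_1,\beta)\ge h(1;k_1,(\tfrac{1}{1-p_1}+0.5)d_1)$ and then runs a nontrivial computation in $z=d_1/d_2$. You never invoke $\beta_{\max}$ in part~(i), and without it the argument does not close.

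For part~(ii), your two-step plan (derive $J_2$, then show it is minimized at $m_2=k_2=0$) is exactly the paper's, which proceeds by showing $J_2(m_2,1,\beta)\ge J_2(m_2,0,\beta)$ and then $J_2(m_2{+}1,0,\beta)\ge J_2(m_2,0,\beta)$. But the inequality you expect to drive the argument is wrong-way-round: the coefficient of $\beta$ in the relevant difference is $(1-p_1)^2(1-p_2)(d_2-d_1)<0$, so the bound that is actually needed is $\beta\le\beta_{\max}=\beta_2$, not $\beta\ge\beta_{\min}$. In both parts, then, it is the \emph{upper} bound on $\beta$ from Lemma~\ref{BETA_RANGE} that does the work; your proposal never uses it.
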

\ifreport
\begin{proof}
See Appendix \ref{convex_prop}.
\end{proof}
\else 
\begin{proof}
Proof of part (i) is provided in Appendix \ref{convex_prop}. For part (ii), the key idea is to show that for all $m_2$,
 $J_2(m_2, k_2,\beta)$ is non-decreasing in $k_2\in \mathcal{K}_2$, and then $J_2(m_2, 0,\beta)$ is non-decreasing in $m_2$. This implies that the optimal decision is $u=2$ at each transmission opportunity. Due to the space limitation, detailed proof is provided in our technical report \cite{yao2020age}.
\end{proof}
\fi
 
With the property in Theorem \ref{CONVEXITY} (i), we are able to use golden section search \cite{press2007section} to find the optimal value of $y$ under condition $d_1(1-p_2)<d_2(1-p_1)$. The details are provided in Algorithm \ref{alg1} (Line 12-20).
In Theorem \ref{CONVEXITY} (i), we make a change of variable, i.e., $m_1$ is replaced with $\log_{p_2}(y)$ in $J_1$, where $y\in (0,1]$. Note that $\log_{p_2}(\cdot)$ is one-to-one functions. Thus, after obtaining the optimal $y$ that minimizes $J_1$ , the corresponding optimal threshold $m_1$ can be easily obtained by comparing 
$J_1({p_2}^{\lfloor \log_{p_2}(y)\rfloor}, k_1,\beta)$ and $J_1({p_2}^{\lceil \log_{p_2}(y)\rceil}; k_1,\beta)$). The details are provided in Algorithm \ref{alg1} (Line 21-22).
Till now, we have solved $\min_{m_1} J_1(m_1, k_1,\beta)$ given $k_1$ and $\beta$. Note that $k_1\in\mathcal{K}_1$ has finite and countable values. Then, we can easily solve \eqref{target2} under condition $d_1(1-p_2)< d_2(1-p_1)$ by searching for the optimal $k_1$ in the finite set $\mathcal{K}_1$.

Note that Theorem \ref{CONVEXITY} (ii) applies to all $\beta\in \![\beta_\text{min},\! \beta_\text{max}\!]$ including $\beta^*$. Thus, the optimal policy for \eqref{type1} is to take $u=2$ at every transmission opportunity, which is returned directly in Algorithm \ref{alg1} (Line 4-5). Under this condition, the step (ii) is only used to find optimal average cost $\beta^*$. 

Moreover, in Line 9-10, we add a judgement sentence, i.e., if the condition $\frac{\partial J_1(y,k_1,\beta)}{\partial y}\vert_{y=1}<0$ is satisfied, then we can directly obtain the optimal $m_1$ without running golden section method. This further reduces the algorithm complexity. The judgement is based on the fact that $\lim_{y\rightarrow 0}\frac{\partial J_1(y, k_1,\beta)}{\partial y}<0$ (this is proved in the proof of Theorem \ref{CONVEXITY} in Appendix \ref{convex_prop}) and $J_1$ is quasi-convex. Thus, if $\frac{\partial J_1(y, k_1,\beta)}{\partial y}\vert_{y=1}<0$, then $J_1$ is non-increasing in $y$.

\textbf{Step (ii): Find $\beta^*$:} By Lemma \ref{TRANS}, if $p(\beta)>0$, then $\beta<\beta^*$; if $p(\beta)<0$, then $\beta>\beta^*$. Thus, we can use bisection method to search for $\beta^*$. The details are provided in Algorithm \ref{alg1} (Line 2-3 and Line 29-33).

\section{A Discussion on the General Multi-Rate Transmission Selection Problem}
\label{diss}
For the multi-rate transmission selection (from more than two-rate) problem, we assume that there are $N\in\mathbb{N}^+$ transmissions for selection such that transmission delays and error probabilities satisfy $d_j>d_{j+1}$ and $p_j<p_{j+1}$, for $j\in \{1,2,\cdots,N\}$, respectively. Since the transmission delays and transmission error probabilities affect the age in opposite direction, it may be difficult to determine the optimal policy. Thanks to the results obtained for the two-rate transmission selection problem, we obtain some useful insights for the general multi-rate transmission selection. 

In particular, for the two-rate transmission selection problem, the optimal decision is monotonically non-increasing in age under condition $\bar{d}_1\leq \bar{d}_2$ and monotonically non-decreasing in age under condition $\bar{d}_1\geq \bar{d}_2$, where $\bar{d}_1$ and $\bar{d}_2$ are mean delays of low and high rate transmissions, respectively, as defined in \eqref{mean_delay}. The monotonic property is obtained by showing that Q function $Q^\alpha(a,u)$ in \eqref{qvalue} is supermodular (submodular) in $(a,u)$ under former (later) condition. If we apply the same technique to the multi-rate transmission selection, we will be able to get the following properties:
\begin{itemize}
	\item (i) if $\bar{d}_1\leq \bar{d}_2\leq\cdots\leq\bar{d}_N$, then the optimal decision will be monotonically non-increasing in age,
	\item (ii) if $\bar{d}_1\geq \bar{d}_2\geq\cdots\geq\bar{d}_N$, then the optimal decision will be monotonically non-decreasing in age.
\end{itemize}
Specifically, for (i), to show the monotonic property, we can show that given transmission $l \in\{1,2,\cdots,N\}$, any transmission $v$ with longer delay, i.e. $d_v>d_l$, satisfy that Q function is supermodular in $(a, u)$. Since the state space is not changed, it will be provable with this technique. The same analysis applies to (ii).
However, for the cases that are not covered in (i) and (ii), we need more investigation on the property of the optimal policy and may use other techniques for the proof, which is for our future work.

%

In addition, if we can show that the optimal policy for the general multi-rate problem is of threshold-type, then machine learning algorithms can be used to determine the optimal threshold. For example, if we regard each threshold-type policy with a certain threshold as a bandit, then basic bandit algorithms like UCB can be exploited to find the optimal threshold \cite{prabuchandran2016reinforcement}. 
 
\section{Numerical Results}
\label{sec5}
In this section, we present some numerical results to explore the performance of the threshold-based age-optimal policy and verify our theoretical results. 

First, we consider an update system, in which $p_1=0.4$ and $p_2=0.75$. 
Table. \ref{thre_delay} illustrates the relation between the optimal threshold versus the transmission delay $d_2$ and the delay ratio $\frac{d_1}{d_2}$ under condition $d_2(1-p_1)>d_1(1-p_2)$. The threshold $(m_1, n_1)$ in the table is obtained by Algorithm \ref{alg1}. 
We observe that the threshold increases with either $d_2$ or the delay ratio $\frac{d_1}{d_2}$. Note that when the age is below the threshold, transmission rate $2$ is selected. Thus, this observation implies that transmission rate $2$ becomes more preferable either when $d_2$ increases with fixed $\frac{d_1}{d_2}$ or when $\frac{d_1}{d_2}$ increases with fixed $d_2$.

\begin{table}[]
\caption{Optimal threshold versus delay}
\begin{tabular}{c|ccccc}
\hline
        & $d_1\!\!=\!\!1.5d_2$ & $d_1\!\!=\!\!1.7d_2$ & $d_1\!\!=\!\!1.9d_2$ & $d_1\!\!=\!\!2.1d_2$ & $d_1\!\!=\!\!2.3d_2$ \\
        \hline
$d_2\!\!=\!\!1$ & (0,0)        & (0,0)        & (1,2)        & (3,4)        & (15,16)      \\
$d_2\!\!=\!\!5$ & (0,1)        & (0,1)        & (1,2)        & (3,4)        & (15,16)      \\
$d_2\!\!=\!\!9$ & (0,1)        & (0,1)        & (1,2)        & (3,4)        & (15,16)     \\
\hline
\end{tabular}
\label{thre_delay}
\end{table}

In Fig. \ref{age_error}, we consider an update system, in which the transmission delays are $d_1=10$ and $d_2=8$. We use ``Delay-Optimal'' to denote the optimal policy that minimizes the average delay by always choosing the transmission rate with minimum mean delay \cite{ozkan2014optimal,de2005managing}. Moreover, we use ``Age-Optimal'' to denote the optimal policy that is obtained from Algorithm \ref{alg1}. We use ``Random $p$'' to denote the policy that chooses transmission rate $1$ with probability $p$. We compare our threshold-based ``Age-Optimal'' policy with ``Random $p$'' policies and the ``Delay-Optimal'' policy, where $p\in \{0.25,0.5\}$.
\begin{figure}[]
 \subfloat[Average age vs $p_1$ given $p_2=0.5$]{
 \includegraphics[scale=.316]{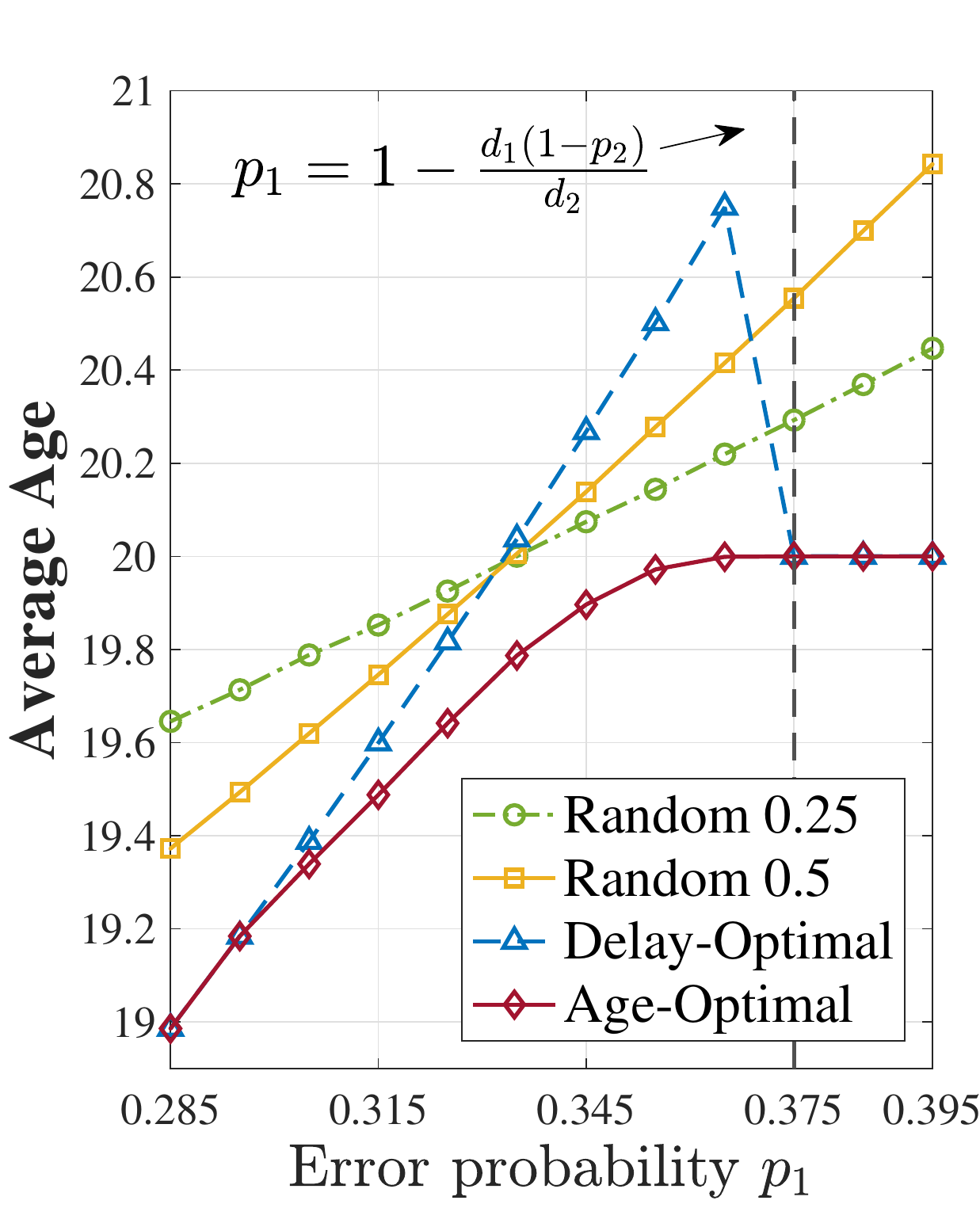}
 \label{age_p1}
 }\ \ \
 \subfloat[Average age vs $p_2$ given $p_1=0.5$]{
 \includegraphics[scale=.316]{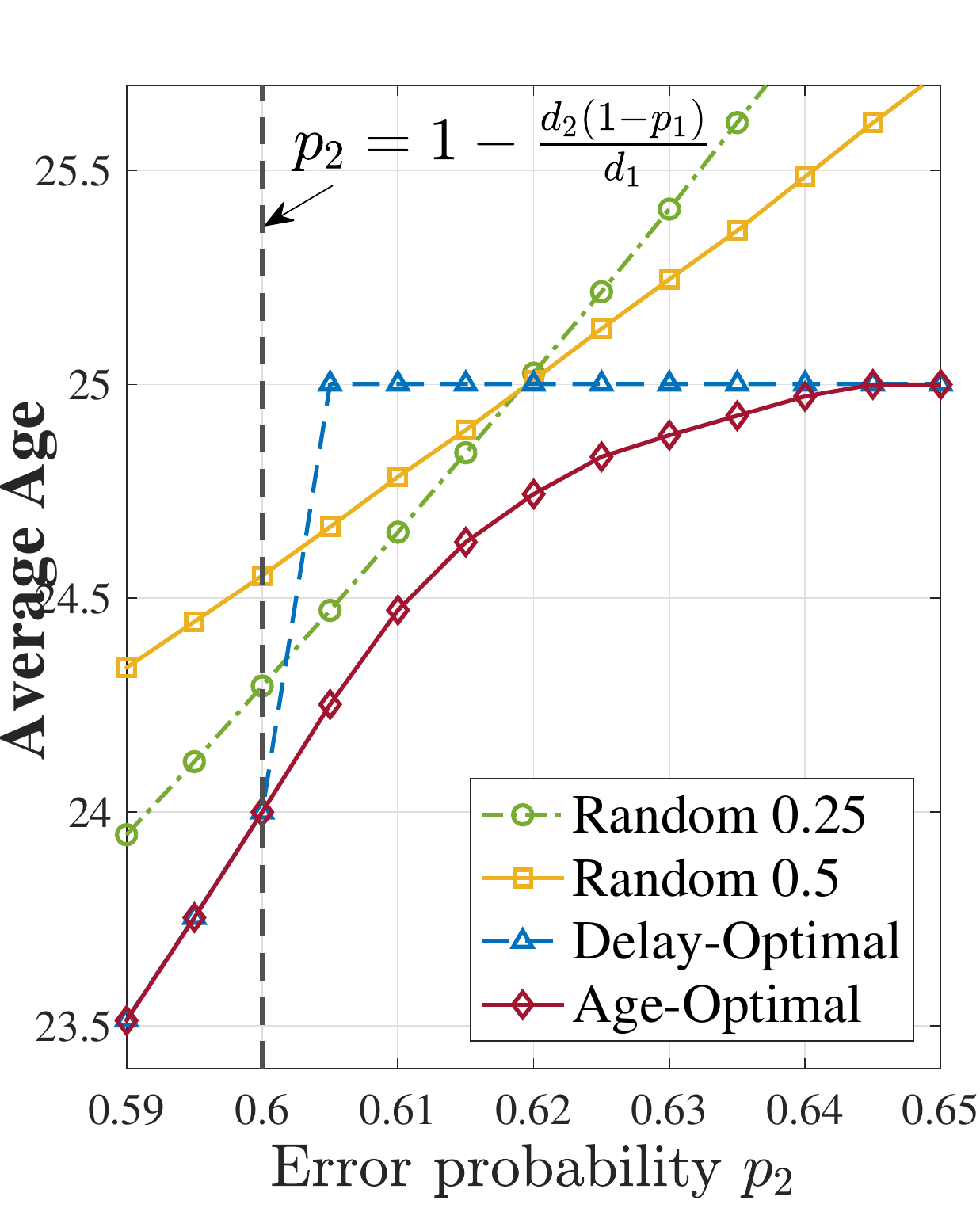}
 \label{age_p2}
 }
 \caption{Total average age versus transmission error probability}
	\label{age_error}
\vspace{-0.2cm}
\end{figure}
Fig. \ref{age_p1} (\ref{age_p2}) illustrates the total average age in \eqref{total_avg_age} versus transmission error probability $p_1$ ($p_2$) given $p_2=0.5$ ($p_1=0.5$). The dashed line in the figure marks the point at which $d_1(1-p_2)=d_2(1-p_1)$. The left and right (right and left) of the line corresponds to conditions $d_1(1-p_2)<d_2(1-p_1)$ and $d_1(1-p_2)>d_2(1-p_1)$ in Fig. \ref{age_p1} (in Fig. \ref{age_p2}), respectively. As we can observe, the age-optimal policy outperforms other plotted policies. This agrees with Theorem \ref{THRE_PROP}. Moreover, the results confirm that the delay-optimal policy does not necessarily minimize the age. In particular, the gap between delay-optimal and age-optimal policy becomes larger as $p_1$ ($p_2$) approaches to the left side (right side) of the dashed line in Fig. \ref{age_p1} (Fig. \ref{age_p2}). The jump in the curve of the delay-optimal policy is incurred by the switch between two transmission rates. For example, in Fig. \ref{age_p1}, delay-optimal policy chooses $u=1$ on the left side of the dashed line, while chooses $u=2$ on the right side. This is because transmission rate $1$ has smaller mean delay on the left side while transmission rate $2$ has smaller mean delay on the right side.

\section{Conclusion}
In this paper, we studied the transmission selection problem for minimizing age of information in information update system with heterogenous transmissions. We assume that there are two different transmissions with varying delay and error probability. We showed that there exists a stationary deterministic optimal transmission selection policy which is of threshold-type in age (Theorem \ref{THRE_PROP}). This result reveals an interesting phenomenon: If the mean delay of the low rate transmission is smaller than that of high rate transmission, then the optimal action chooses the one with higher mean delay when age is smaller than a certain threshold. This is in contrary with the delay-optimal policy that always chooses the transmission with lower mean delay. In addition, we showed that if the mean delay of the low rate transmission is smaller than that of high rate transmission, the average cost is quasi-convex in a threshold related variable; otherwise, the optimal policy chooses $u=2$ for each transmission opportunity (Theorem \ref{CONVEXITY}). This enabled us to design a low-complexity algorithm to obtain the optimal policy (Algorithm \ref{alg1}). For the future work, we plan to study the multi-rate scenario with more than two selections of heterogenous transmissions based on the insights discussed in Section \ref{diss}.

\bibliographystyle{unsrt}
\bibliography{Reference}
\appendix
\section{Proof of Lemma \ref{TRANS}}
\label{appe0}
We show (i) with two steps.

\textit{Step1:} This step proves that $\bar{\Delta}^* \leq \beta$ if and only if $p(\beta) \leq 0$.

If $\bar{\Delta}^* \leq \beta$, there exists a $\pi$ that is feasible for \eqref{equiv1} and \eqref{equiv2} which satisfies 
\begin{align}
	\limsup_{n\rightarrow \infty}	\frac{\sum_{0}^{n-1} \mathbb{E}[a_i Y_{i+1}+\frac{1}{2}Y_{i+1}^2]}{\sum_{i=0}^{n-1}\mathbb{E}[ Y_{i+1}]}\leq \beta \label{l1_1}.
\end{align}
Therefore, we have 
\begin{align}
	\limsup_{n\rightarrow \infty}	\frac{\frac{1}{n}\sum_{0}^{n-1} \mathbb{E}[(a_i-\beta) Y_{i+1}+\frac{1}{2}Y_{i+1}^2]}{\frac{1}{n}\sum_{i=0}^{n-1}\mathbb{E}[ Y_{i+1}]}\leq 0 \label{l1_2}.
\end{align}
Since $Y_i \in \mathcal{D}$, $Y_i$'s are bounded and positive and $\mathbb{E}[Y_i]>0$ for all $i$. Then, we have $$0< \liminf _{n \rightarrow \infty}\frac{1}{n}\sum_{i=0}^{n-1}\mathbb{E}[ Y_{i+1}]\leq \limsup_ {n \rightarrow \infty}\frac{1}{n}\sum_{i=0}^{n-1}\mathbb{E}[ Y_{i+1}]\leq \max \{d_1,d_2\}.$$ 
Thus, we obtain
\begin{align}
	\limsup_{n\rightarrow \infty}\frac{1}{n}\sum_{0}^{n-1} \mathbb{E}[(a_i-\beta) Y_{i+1}+\frac{1}{2}Y_{i+1}^2]\leq 0 \label{l1_3}.
\end{align}
That is, $p(\beta)\leq 0$.
For the reverse direction, if $p(\beta)\leq 0$, then there exists a $\pi$ that is feasible for \eqref{equiv1} and \eqref{equiv2} which satisfies \eqref{l1_3}. Since $$0< \liminf _{n \rightarrow \infty}\frac{1}{n}\sum_{i=0}^{n-1}\mathbb{E}[ Y_{i+1}]\\ \leq \limsup_ {n \rightarrow \infty}\frac{1}{n}\sum_{i=0}^{n-1}\mathbb{E}[ Y_{i+1}]\leq \max \{d_1,d_2\},$$ we can divide \eqref{l1_3} by $\frac{1}{n}\sum_{i=0}^{n-1}\mathbb{E}[ Y_{i+1}]$ to get \eqref{l1_2}, which implies \eqref{l1_1}. Thus, we have $\bar{\Delta}^* \leq \beta$.

\textit{Step 2:} We can prove that $\bar{\Delta}^* < \beta$ if and only if $p(\beta) < 0$ with same argument in Step 1, where $\leq$ is replaced with $<$. Finally, from the Step 1, it immediately follows that $\bar{\Delta}^* > \beta$ if and only if $p(\beta) > 0$

Part (ii): By (i), $p(\beta)=0$ is equivalent to $\bar{\Delta}^*=\beta$. With this condition, we firstly show that each optimal solution to \eqref{equiv1} is an optimal solution to \eqref{equiv2}. Suppose policy $\pi$ is an optimal solution to \eqref{equiv1}. Then, $\bar{\Delta}(\pi)=\bar{\Delta}^*=\beta$. With similar argument of \eqref{l1_1}-\eqref{l1_3}, we can get that policy $\pi$ satisfies
\begin{align}
	\limsup_{n\rightarrow \infty}	\frac{1}{n}\sum_{0}^{n-1} \mathbb{E}[(a_i-\beta) Y_{i+1}+\frac{1}{2}Y_{i+1}^2]= 0 .
\end{align}
Since $p(\beta)=0$ (optimal value for \eqref{equiv2} is 0), policy $\pi$ is an optimal solution to \eqref{equiv2}.

Similarly, we can prove that each optimal solution to \eqref{equiv2} is an optimal solution to \eqref{equiv1}.

\section{Proof of Proposition \ref{DISC_EXIST}}
\label{proposition}
Let $w$ be a positive real-valued function on $\mathcal{S}$ defined by 
\begin{align}
	w(a)=\max(ad_1+0.5d_1^2,1).
\end{align}
Define the weighted supremum norm $\vert\vert\cdot\vert\vert_w$ for real-valued functions $f$ on $\mathcal{S}$ by
$$ \vert\vert f\vert\vert_w=\sup_{a\in \mathcal{S}}\frac{f(a)}{w(a)},$$
and let $\mathcal{F}$ be the space of real-valued functions $f$ on $\mathcal{S}$ that satisfies $\vert\vert f\vert\vert_w<\infty$. Note that $V_0^\alpha \in \mathcal{F}$. By Theorem 6.10.4 in \cite{puterman2014markov}, we only need to show the following two assumptions hold:

\noindent\emph{Assumption 1:} There exists a constant $\mu<\infty$ such that 
\begin{align}
	\max_{u\in \mathcal{U}}\vert C(a,u)\vert\leq\mu w(a).
	\label{assum1}
\end{align}
\emph{Assumption 2:}
(i) There exists a constant $L$, $0\leq L<\infty$, such that
\begin{align}
	\sum_{a^\prime\in\mathcal{S}}P(a^\prime|a,u)w(a^\prime)\leq w(a)L,
	\label{assum2i}
\end{align}
for all $u$ and $a$.

(ii) For each $\alpha$, there exists an $\gamma$, $0\leq\gamma<1$ and an integer $M$ such that for all deterministic policies $\pi$,
\begin{align}
\alpha^M \sum_{a^\prime\in\mathcal{S}}P_\pi^M(a^\prime|a)w(a^\prime)\leq \gamma w(a)	,
\label{assum2ii}
\end{align}
where $P_\pi^M$ denotes the $M$-stage transition probability under policy $\pi$.

First, we show that Assumption 1 holds with $\mu=\max\{d_1\beta,1\}$. Recall that $C(a,u)=(a-\beta)d_u+0.5d_u^2$. If $(a-\beta)d_u+0.5d_u^2\geq 0$, then $\vert C(a,u)\vert=(a-\beta)d_u+0.5d_u^2\leq ad_u+0.5d_u^2\leq ad_1+0.5d_1^2\leq \mu w(a)$; otherwise, $\vert C(a,u)\vert=\beta d_u-ad_u-0.5d_u^2\leq\beta d_u\leq \beta d_1\leq \mu w(a)$.
Next, we show that Assumption 2 holds. Before providing the proof for Assumption 2, we first show that for all $l\in\mathbb{N}^+$, \eqref{tmp} holds, which will be used in our proof for Assumption 2. 
\begin{align}
	w(a+ld_1)\leq (2l+1+ld_1^2)w(a).
	\label{tmp}
\end{align}
We show that \eqref{tmp} holds by analyzing different cases:

\emph{Case 1:} If $(a+ld_1)d_1+0.5d_1^2\leq 1$, then $w(a+ld_1)=w(a)=1$. Thus, $w(a+ld_1)\leq (2l+1+ld_1^2)w(a)$.

\emph{Case 2:} If $(a+ld_1)d_1+0.5d_1^2>1$ and $ad_1+0.5d_1^2>1$, then we have
\begin{align}
	&w(a+ld_1)-(2l+1+ld_1^2)w(a)\notag\\
	=&(a+ld_1)d_1+0.5d_1^2-(2l+1+ld_1^2)(ad_1+0.5d_1^2)\\
	=&-2lad_1-lad_1^3-0.5ld_1^4\\
	<& 0.
\end{align}

\emph{Case 3:} If $(a+ld_1)d_1+0.5d_1^2>1$ and $ad_1+0.5d_1^2<1$, then $ad_1<1-0.5d_1^2$ and we have
\begin{align}
	w(a+ld_1)&=(a+ld_1)d_1+0.5d_1^2\\
	&\leq 1-0.5d_1^2+ld_1^2+0.5d_1^2\\
	&\leq 2l+1+ld_1^2\\
	&=(2l+1+ld_1^2)w(a).
\end{align}
This completes the proof of \eqref{tmp}. Next, using \eqref{tmp}, we show Assumption 2 holds.
Note that $w(a)$ is an non-decreasing function with $a$. With this, we have
\begin{align}
	\sum_{a^\prime\in\mathcal{S}}P(a^\prime|a, u)w(a^\prime)&=p_u w(a+d_u)+(1-p_u) w(d_u)\\
	&\leq w(a+d_1)\\
	&\leq (3+d_1^2)w(a),
\end{align}
where the last inequality holds by \eqref{tmp}.
Hence, Assumption 2(i) holds with $L=3+d_1^2$. Similarly, for any deterministic policy $\pi$, we have 
\begin{align}
	\alpha^M \sum_{a^\prime\in\mathcal{S}}P_\pi^M(a^\prime|a)w(a^\prime)&\leq \alpha^M w(a+Md_1)\\
	&\leq \alpha^M (2M+1+Md_1^2)w(a).
\end{align}
Consequently, for $M$ sufficiently large, $\alpha^M (2M+1+Md_1^2)<1$. Hence, Assumption 2(ii) holds.

\section{Proof of Lemma \ref{THRESHOLD}}
\label{discount_threshold}
	(i) Let $a_1$ and $a_2$ be the age such that $a_1\geq a_2$. We want to show that if $Q^\alpha(a_2,1)\leq Q^\alpha(a_2,2)$, then $Q^\alpha(a_1,1)\leq Q^\alpha(a_1,2)$. It suffices to show that $Q^\alpha(a,1)-Q^\alpha(a,2)$ is decreasing with age $a$, i.e.,
	\begin{align}
		Q^\alpha(a_2,1)-Q^\alpha(a_2,2)\geq Q^\alpha(a_1,1)-Q^\alpha(a_1,2).
	\end{align}
	By Proposition \ref{DISC_EXIST}, we only need to show that for $n\in \mathbb{N}$, 
	\begin{align}
		&Q^\alpha_{n+1}(a_2,1)-Q^\alpha_{n+1}(a_2,2)\geq Q^\alpha_{n+1}(a_1,1)-Q^\alpha_{n+1}(a_1,2)\\
		\Leftrightarrow &C(a_2,1)+\alpha p_1V_n^\alpha(a_2+d_1)+\alpha(1-p_1)V_n^\alpha(d_1)\notag\\
		&-C(a_2,2)-\alpha p_2V_n^\alpha(a_2+d_2)-\alpha(1-p_2)V_n^\alpha(d_2)\notag\\
		&\geq 
		C(a_1,1)+\alpha p_1V_n^\alpha(a_1+d_1)+\alpha(1-p_1)V_n^\alpha(d_1)\notag\\
		&-C(a_1,2)-\alpha p_2V_n^\alpha(a_1+d_2)-\alpha(1-p_2)V_n^\alpha(d_2)\\
		\Leftrightarrow &(a_2-a_1)(d_1-d_2)-\alpha p_1V^\alpha_{n}(a_1+d_1)+\alpha p_1V^\alpha_{n}(a_2+d_1)\notag\\
		&-\alpha p_2V^\alpha_{n}(a_2+d_2)+\alpha p_2V^\alpha_{n}(a_1+d_2)\geq0.
		\label{st1}
	\end{align}
	We show \eqref{st1} by induction. When $n=0$, substitute $V_0^\alpha(a)=\frac{d_1-d_2}{\alpha(p_2-p_1)}a$ into the left-hand-side of \eqref{st1} with $n$ replaced with 0 and we have 
	\begin{align}
		&(a_2-a_1)(d_1-d_2)-\alpha p_1V^\alpha_{0}(a_1+d_1)+\alpha p_1V^\alpha_{0}(a_2+d_1)\notag\\
		&-\alpha p_2V^\alpha_{0}(a_2+d_2)+\alpha p_2V^\alpha_{0}(a_1+d_2)\notag\\
		&=0
	\end{align}
Hence, \eqref{st1} holds when $n=0$. Suppose \eqref{st1} holds for $n$, we will show that it holds for $n+1$.
Let $u_1, u_2, u_3, u_4$ be the optimal actions in state $a_1+d_1, a_2+d_1, a_2+d_2, a_1+d_2$, respectively. Specifically, $V^\alpha_{n+1}(a_1+d_1)=Q^\alpha_{n+1}(a_1+d_1,u_1)$, $V^\alpha_{n+1}(a_2+d_1)=Q^\alpha_{n+1}(a_2+d_1,u_2)$, $V^\alpha_{n+1}(a_2+d_2)=Q^\alpha_{n+1}(a_2+d_2,u_3)$ and $V^\alpha_{n+1}(a_1+d_2)=Q^\alpha_{n+1}(a_1+d_2,u_4)$. Then, the left-hand-side of \eqref{st1} is 
\begin{align}
	&(a_2-a_1)(d_1-d_2)\notag\\
	&-\alpha p_1Q^\alpha_{n+1}(a_1+d_1,u_1)+\alpha p_1Q^\alpha_{n+1}(a_2+d_1,u_2)\notag\\
		&-\alpha p_2Q^\alpha_{n+1}(a_2+d_2,u_3)+\alpha p_2Q^\alpha_{n+1}(a_1+d_2,u_4)\notag\\
=&(a_2-a_1)(d_1-d_2)\notag\\
&+\underbrace{\alpha p_1Q^\alpha_{n+1}(a_2+d_1,u_2)-\alpha p_1Q^\alpha_{n+1}(a_1+d_1,u_2)}_{A}\notag\\
&+\underbrace{\alpha p_1Q^\alpha_{n+1}(a_1+d_1,u_2)-\alpha p_1Q^\alpha_{n+1}(a_1+d_1,u_1)}_{\geq 0 \ (\text{By optimality of action } u_1)}\notag\\
&-(\underbrace{\alpha p_2Q^\alpha_{n+1}(a_2+d_2,u_4)-\alpha p_2Q^\alpha_{n+1}(a_1+d_2,u_4)}_{B})\notag\\
&+\underbrace{\alpha p_2Q^\alpha_{n+1}(a_2+d_2,u_4)-\alpha p_2Q^\alpha_{n+1}(a_2+d_2,u_3)}_{\geq 0 \ (\text{By optimality of action } u_3)}.
\label{t0}
\end{align}
By induction hypothesis, we have
\begin{align}
	A=&\alpha p_1 \min_{u_2\in \{1,2\}}\big\{(a_2-a_1)d_{u_2}+p_{u_2}V_n^\alpha(a_2+d_1+d_{u_2})\notag\\
	&-p_{u_2}V_n^\alpha(a_1+d_1+d_{u_2}) \big\}\\
	\geq &\alpha p_1\Big((a_2-a_1)d_2+p_{2}V_n^\alpha(a_2+d_1+d_2)\notag\\
	&-p_{2}V_n^\alpha(a_1+d_1+d_2)\Big)
	\label{t2}.
\end{align}
Similarly, 
\begin{align}
	B\leq &\alpha p_2\Big((a_2-a_1)d_1+p_{1}V_n^\alpha(a_2+d_2+d_1)\notag\\
	&-p_{1}V_n^\alpha(a_1+d_2+d_1)\Big).
	\label{t3}
\end{align}
Hence, substitute \eqref{t2} and \eqref{t3} into \eqref{t0} and we obtain
\begin{align}
&(a_2-a_1)(d_1-d_2)\notag\\
	&-\alpha p_1Q^\alpha_{n+1}(a_1+d_1,u_1)+\alpha p_1Q^\alpha_{n+1}(a_2+d_1,u_2)\notag\\
		&-\alpha p_2Q^\alpha_{n+1}(a_2+d_2,u_3)+\alpha p_2Q^\alpha_{n+1}(a_1+d_2,u_4)\notag\\
	\geq & (a_2-a_1)(d_1-d_2)+A-B\\
	\geq &(a_2-a_1)\left((1-\alpha p_2)d_1-(1-\alpha p_1)d_2\right)\\
	\geq & 0,\label{num1}
\end{align}
where \eqref{num1} holds by the condition $(1-\alpha p_2)d_1\leq(1-\alpha p_1)d_2$.

(ii) With similar analysis in part (i), it suffices to show that for $a_1\geq a_2$,
\begin{align}
	Q^\alpha(a_2,1)-Q^\alpha(a_2,2)\leq Q^\alpha(a_1,1)-Q^\alpha(a_1,2).
\end{align}
The proof is similar to part (i). 
\section{Proof of Lemma \ref{CANDI}}
\label{candidate_policy}
Part (i) is a direct result of the Lemma in \cite{sennott1989average}. For (ii), we have two cases:

Case 1:
If $d_1(1-p_2)\leq d_2(1-p_1)$ (that is, $\frac{d_1-d_2}{d_1p_2-d_2p_1}<1$), then there exists an integer $M$ such that $\gamma_n>\frac{d_1-d_2}{d_1p_2-d_2p_1}$ (that is, $d_1(1-\gamma_n p_2)<d_2(1-\gamma_n p_1)$) for all $n\geq M$. Thus, for all $n\geq M$, $\pi^{\gamma_n}$ is of the threshold-type that is defined in \eqref{type2}. This implies that $\pi^\star$ is of the threshold-type defined in \eqref{type2}. 

Case 2: If $d_1(1-p_2)\geq d_2(1-p_1)$ (that is, $\frac{d_1-d_2}{d_1p_2-d_2p_1}\geq 1$), then for all $n$, $\gamma_n<1\leq\frac{d_1-d_2}{d_1p_2-d_2p_1}$ (that is, $d_1(1-\gamma_n p_2)>d_2(1-\gamma_n p_1)$). In this case, $\pi^{\gamma_n}$ is of threshold-type defined in \eqref{type1}. Then, $\pi^\star$ is of threshold-type defined in \eqref{type1}.

\section{Proof of Condition Verification in Theorem \ref{THRE_PROP}}
\label{app1}

By \cite{sennott1989average}, we need to verify the following two conditions, A1 and A2, hold.
\begin{itemize}
	\item A1: There exists a deterministic stationary transmission selection policy $\pi'$ such that the resulting Markov chain is irreducible and aperiodic, and the average cost $J(\pi',\beta)$ is finite.
	\item A2: There exists a non-negative $L$ such that $-L\leq h^\alpha (a)\triangleq V^\alpha(a)-V^\alpha(a_0)$ for all $a$ and $\alpha$, where $a_0$ is a reference state.
\end{itemize}

For A1, let $\pi'$ be the policy that always chooses $u=2$. It is obvious that the resulting Markov chain is irreducible and aperiodic. 
By \cite{hsu2019scheduling}, if we regard dynamics of the age as a queue system, then the average arrival rate is $d_2$ per stage since age increases by $d_2$ per stage, and the average service rate is infinite since each successful delivery reduces age (can be very large) to $d_2$ and successful delivery occurs with positive probability $1-p_2$. Thus, average age-queue size is finite, i.e.,  $\limsup_{n \rightarrow \infty} \frac{1}{n}\mathbb{E_{\pi'}}\sum_{i=0}^{n-1} a_i<\infty$. Thus, 
\begin{align}
	J(\pi',\beta)=\limsup_{n \rightarrow \infty} \frac{1}{n}\mathbb{E_{\pi'}}\left[\sum_{i=0}^{n-1} ((a_i-\beta)d_2+0.5d_2^2)\right]<\infty.
\end{align}

For A2, if we are able to show that $V^\alpha(a)$ is non-decreasing in age $a$, then $h^\alpha (a)$ is non-decreasing in age $a$. This implies that $h^\alpha (a)\geq h^\alpha (0)=V^\alpha(0)-V^\alpha(a_0)$, since $0$ is the smallest age in $\mathcal{S}$. Then, A2 holds with $L=-V^\alpha(0)+V^\alpha(a_0)\geq 0$. Now we prove that $V^\alpha(a)$ is non-decreasing in age $a$.

\begin{lemma}
\label{mono}
	Given discount factor $\alpha$, the value function $V^\alpha(a)$ is non-decreasing with age $a$.
\end{lemma}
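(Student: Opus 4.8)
The plan is to prove Lemma~\ref{mono} by induction on the value-iteration sequence $\{V_n^\alpha\}$ from Proposition~\ref{DISC_EXIST}(c). Since $V_n^\alpha(a)\to V^\alpha(a)$ pointwise, it suffices to show that each $V_n^\alpha(\cdot)$ is non-decreasing in $a$; monotonicity is preserved under pointwise limits, so $V^\alpha$ inherits the property. The base case is immediate: $V_0^\alpha(a)=\frac{d_1-d_2}{\alpha(p_2-p_1)}a$ and the coefficient is positive because $d_1>d_2$ and $p_2>p_1$, so $V_0^\alpha$ is (strictly) increasing in $a$.

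For the inductive step, assume $V_n^\alpha$ is non-decreasing. Recall
\begin{align}
V_{n+1}^\alpha(a)=\min_{u\in\mathcal{U}}\Big\{ C(a,u)+\alpha p_u V_n^\alpha(a+d_u)+\alpha(1-p_u)V_n^\alpha(d_u)\Big\}.
\end{align}
Fix $a_1\geq a_2$. For any fixed action $u$, I would compare $Q_{n+1}^\alpha(a_1,u)$ with $Q_{n+1}^\alpha(a_2,u)$: the cost term satisfies $C(a_1,u)-C(a_2,u)=(a_1-a_2)d_u\geq 0$ since $d_u>0$; the term $\alpha(1-p_u)V_n^\alpha(d_u)$ does not depend on $a$; and $V_n^\alpha(a_1+d_u)\geq V_n^\alpha(a_2+d_u)$ by the induction hypothesis (together with $a_1+d_u\geq a_2+d_u$), so the discounted future term is also non-decreasing. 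Hence $Q_{n+1}^\alpha(a_1,u)\geq Q_{n+1}^\alpha(a_2,u)$ for every $u\in\mathcal{U}$. Taking the minimum over $u$ preserves this inequality (the minimum of a family of functions each dominating the corresponding member of another family still dominates), so $V_{n+1}^\alpha(a_1)\geq V_{n+1}^\alpha(a_2)$, completing the induction.

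The argument is essentially routine — the only point requiring care is making sure the states stay in $\mathcal{S}$ and that the comparison $V_n^\alpha(a_1+d_u)\geq V_n^\alpha(a_2+d_u)$ is a legitimate instance of the induction hypothesis, which it is since $a_1+d_u$ and $a_2+d_u$ lie in $\mathcal{S}$ whenever $a_1,a_2$ do and $a_1+d_u\geq a_2+d_u$. I do not anticipate a serious obstacle here; the main subtlety in the surrounding development (the \emph{difference} $Q^\alpha(a,1)-Q^\alpha(a,2)$ being monotone, needed for the threshold structure) is handled separately in Lemma~\ref{THRESHOLD}, whereas for Lemma~\ref{mono} we only need monotonicity of $V^\alpha$ itself, which follows from monotonicity of each $Q_{n+1}^\alpha(\cdot,u)$ in its first argument plus closure of monotonicity under $\min$ and pointwise limits.
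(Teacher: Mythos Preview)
Your proposal is correct and follows essentially the same approach as the paper: induction on the value-iteration sequence $V_n^\alpha$ from Proposition~\ref{DISC_EXIST}(c), with the base case from the linear form of $V_0^\alpha$ and the inductive step obtained by showing each $Q_{n+1}^\alpha(\cdot,u)$ is non-decreasing and then taking the minimum. The paper's write-up is slightly terser but the argument is identical.
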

\begin{proof}
We use induction to show that for all $n\geq 0$, $V_n^\alpha(a_1)\geq V_n^\alpha(a_2)$ holds whenever $a_1\geq a_2$.
	Then, since $V_{n}^\alpha(a)\rightarrow V^\alpha(a)$ as $n\rightarrow \infty$ by Proposition \ref{DISC_EXIST}, we can conclude that $V^\alpha(a)$ is non-decreasing with age $a$.	 
	
	When $n=0$, since $\frac{d_1-d_2}{\alpha(p_2-p_1)}\geq 0$, $V_0^\alpha(a)=\frac{p_1-d_2}{\alpha(p_2-p_1)}a$ is non-decreasing with age $a$. It remains to show that given that $V_n^\alpha(a_1)\geq V_n^\alpha(a_2)$, the inequality $V_{n+1}^\alpha(a_1)\geq V_{n+1}^\alpha(a_2)$ holds. In particular, we have 
	\begin{align}
		Q^\alpha_{n+1}(a_1,u)=&C(a_1,u)\!+\!\alpha p_uV_n^\alpha(a_1+d_u)\!+\!\alpha(1-p_u)V_n^\alpha(d_u)\\
		\geq &C(a_2,u)\!+\!\alpha p_uV_n^\alpha(a_2+d_u)\!+\!\alpha(1-p_u)V_n^\alpha(d_u)\label{tmp2}\\
		=&Q^\alpha_{n+1}(a_2,u),
	\end{align}
	where $\eqref{tmp2}$ holds by $C(a_1,u)>C(a_2,u)$ and induction hypothesis.
	Hence, we have
	\begin{align}
		V_{n+1}^\alpha(a_1)=&\min_{u\in\{1,2\}} Q^\alpha_{n+1}(a_1,u)
		\\
		\geq &\min_{u\in\{1,2\}} Q^\alpha_{n+1}(a_2, u)\\
		=& V_{n+1}^\alpha(a_2).
	\end{align}	
\end{proof}

\section{Proof of Lemma \ref{BETA_RANGE}}
\label{range}
We construct an infeasible policy that chooses $u=2$ at every transmission opportunity, where we assume that the transmission is error-free. 
In this case, we obtain a lower bound $\beta_\text{min}$ of $\beta$ given by
\begin{align}
	\beta_\text{min}=\limsup_{n\rightarrow \infty} \frac{\sum_{0}^{n-1}(d_2^2+\frac{1}{2}d_2^2)}{\sum_{0}^{n-1} d_2}=\frac{3}{2}d_2.
\end{align}

We use $\pi_1$ and $\pi_2$ to denote the policies that always choose $u=1$ and $u=2$, respectively. Moreover, we use $\bar{\Delta}_1$ and $\bar{\Delta}_2$ to denote the average expected age under policy $\pi_1$ and $\pi_2$, respectively. By optimality, $\beta^*=\bar{\Delta}^*\leq \min\{\bar{\Delta}_1,\bar{\Delta}_2\}$. 

Next, we calculate $\bar{\Delta}_1$.
Note that $\pi_1$ results in an Markov chain with a single recurrent class. The state transition diagram is given in Fig. \ref{trans_diag}.
\begin{figure}
	\centering
	\includegraphics[width=0.38\textwidth]{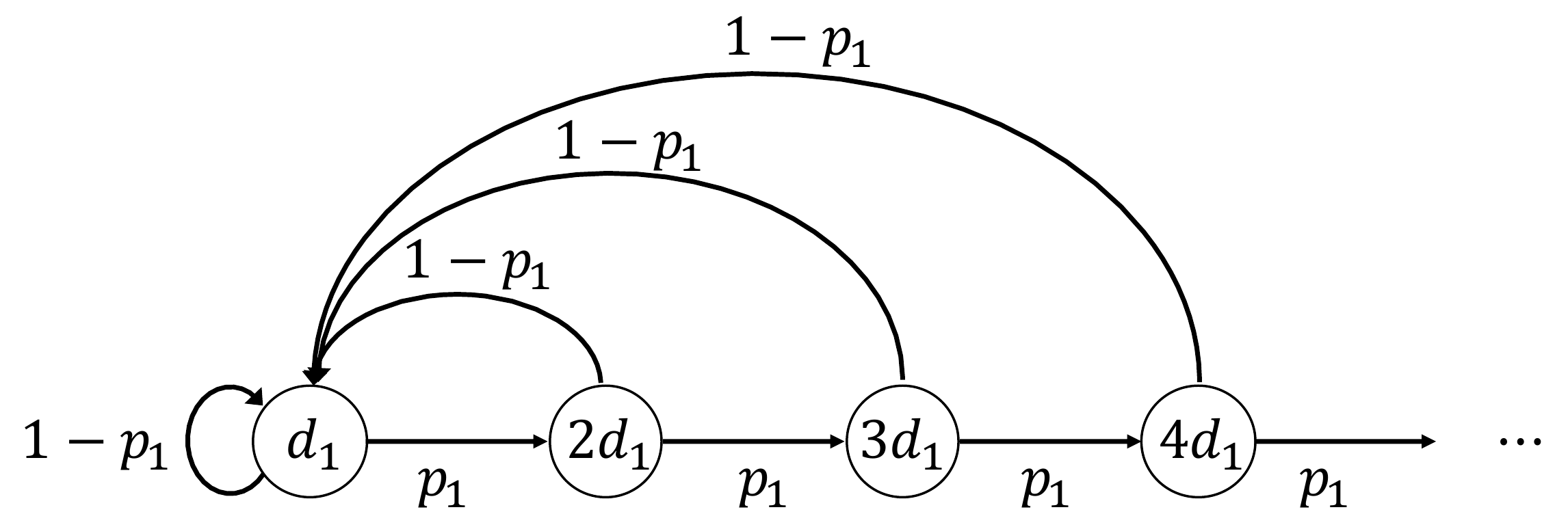}
	\caption{State transition diagram under policy $\pi_1$ that always choose $u=1$}
	\label{trans_diag}
\end{figure}
Let $x_l$ be the steady state probability of the state $a=ld_1$, $l\in \mathbb{N}^+$ when policy $\pi_1$ is used.
Based on the state transition diagram, balance equations can be obtained as follows:
\begin{align}
	& p_1x_{l}=x_{l+1},	\ \ \ \ \forall l\in \{1,2\cdots\} .
\end{align}
Then, $x_l$ is expressed as 
\begin{align}
	x_{l}=x_1p_1^{l-1},	\ \ \ \    \forall l\in \{1,2\cdots\}.
\end{align}
Since $\sum_{l=1}^{\infty}x_l=1$, we obtain $x_1=1-p_1$.
Then, under policy $\pi_1$, the expected age $\mathbb{E}_{\pi_1} [a]$ is given by
\begin{align}
	\mathbb{E}_{\pi_1} [a]=\sum_{l=1}^\infty ld_1x_l=\sum_{i=1}^\infty lp_1^{l-1}(1-p_1)d_1=\frac{1}{1-p_1}d_1.
\end{align}
Under policy $\pi_1$, $Y_{i}=d_1$ for all $i$. Then, $\bar{\Delta}_1$ is given by
\begin{align}
	\bar{\Delta}_1&=\limsup_{n\rightarrow \infty}	\frac{\sum_{0}^{n-1} \mathbb{E}_{\pi_1}[a_i d_{1}+\frac{1}{2}d_{1}^2]}{\sum_{i=0}^{n-1}\mathbb{E}_{\pi_1}[ d_{1}]}\\
	&=\mathbb{E}_{\pi_1} [a_i]+\frac{1}{2}d_1\\
	&=(\frac{1}{1-p_1}+\frac{1}{2})d_1.
\end{align}
In a similar way, we obtain $\bar{\Delta}_2=(\frac{1}{1-p_2}+\frac{1}{2})d_2$. Hence, we obtain an upper bound $\beta_\text{max}$ of $\beta$, which is given by $$\beta_\text{max}=\min\{\bar{\Delta}_1,\bar{\Delta}_2\}=\min\Big\{\frac{1}{1-p_1}+\frac{1}{2})d_1,(\frac{1}{1-p_1}+\frac{1}{2})d_2\Big\}.$$

\section{Proof of Theorem \ref{CONVEXITY}}
\label{convex_prop}
(i) We first obtain expression of average cost in \eqref{cost2} with aid of state transition diagram.
The state transition diagram under the policy in \eqref{type2} is given in Fig. \ref{state_trans_thres}. 
\begin{figure}
	\centering
	\includegraphics[width=0.48\textwidth]{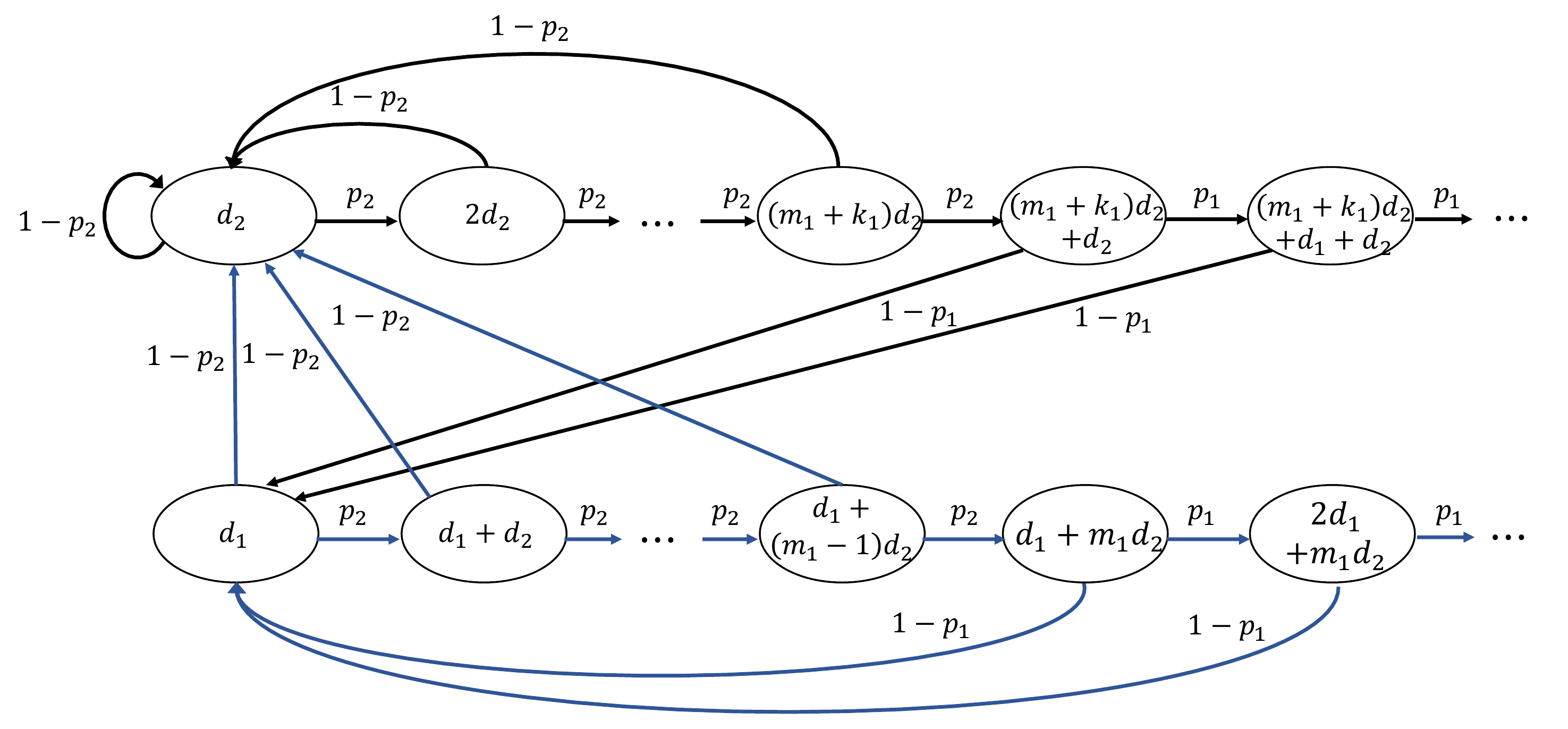}
	\caption{State transition diagram under policy in \eqref{type2}}
	\label{state_trans_thres}
\end{figure}
Define the state steady probabilities $x_l$, $x'_l$, $z_l$ and $z'_l$ under policy in \eqref{type2} as
\begin{align}
	&x_l\triangleq\mathbb{P}(a=d_2+ld_2), &0\leq l\leq m_1+k_1,\\
	&x'_l\triangleq\mathbb{P}(a=d_2+(m_1+k_1)d_2+ld_1), &l\geq 0,\\
	&z_l\triangleq\mathbb{P}(a=d_1+ld_2), &0\leq l\leq m_1,\\
	&z'_l\triangleq\mathbb{P}(a=d_1+m_1d_2+ld_1), &l\geq 0.
\end{align}
Based on the state transition diagram, balance equations can be obtained as follows:
\begin{align}
	& x_0=(1-p_2)\left(\sum_{l=0}^{m_1-1}z_l+\sum_{l=0}^{m_1+k_1-1}x_l\right),\label{e1}\\
	& p_2x_{l}=x_{l+1},	\ \ \ \  l\in \{0,1\cdots m_1+k_1-1\}, \label{e2}\\
	& p_2z_l=z_{l+1}, \ \ \ \  l\in \{0,1\cdots m_1-1\}, \label{e3}\\
	& p_1x'_{l}=x'_{l+1},	\ \ \ \  l\in \{0,1\cdots \}, \label{e4}\\
	& p_1z_l=z'_{l+1},	\ \ \ \  l\in \{0,1\cdots \}. \label{e5}
\end{align}
Solving the equations \eqref{e1}-\eqref{e5}, we obtain the expressions of $x_l$, $x'_l$, $z_l$ and $z'_l$ in terms of $x_0$ as follows:
\begin{align}
	&x_l=p_2^lx_0, \ \ \ \  l\in \{0,1\cdots m_1+k_1\},\label{ee1}\\
	&z_l=\frac{p_2^{m_1+k_1}p_2^l}{1-p_2^{m_1}}x_0,\ \ \ \  l\in \{0,1\cdots m_1\},\label{ee2}\\
	&x'_l=p_2^{m_1+k_1}p_1^lx_0, \ \ \ \  l\in \{0,1\cdots\},\label{ee3}\\
	&z'_l=\frac{p_2^{2m_1+k_1}p_1^l}{1-p_2^{m_1}}x_0,\ \ \ \  l\in \{0,1\cdots\}.,\label{ee4}
\end{align}
Substituting \eqref{ee1}-\eqref{ee4} into $\sum_{l=0}^{m_1+k_1}x_l+\sum_{l=0}^{m_1}z_l+\sum_{l=1}^{\infty}x'_l+\sum_{l=1}^{\infty}z'_l=1$, we obtain $x_0$ as
\begin{align}
	x_0=\frac{(1-p_1)(1-p_2)(1-p_2^{m_1})}{1-p_1+p_2^{m_1}(p_1-1)+p_2^{m_1+k_1}(1-p_2)}.
	\label{x0}
\end{align}

The average cost $J_1(m_1, k_1,\beta)$ is expressed as
\begin{align}
J_1(m_1, k_1,\beta)=&\sum_{l=0}^{m_1+k_1-1}C(d_2+ld_2,d_2)x_l+\sum_{l=0}^{m_1-1}C(d_1+ld_2,d_2)z_l\notag\\
&+\sum_{l=0}^{\infty}C(d_2+(m_1+k_1)d_2+ld_1,d_1)x'_l\notag\\
&+\sum_{l=0}^{\infty}C(d_1+m_1d_2+ld_1,d_1)z'_l.
\label{J}	
\end{align}
where $C(\cdot,\cdot)$ is the cost function defined in \eqref{inst_cost}.
Substitute \eqref{ee1}-\eqref{x0} into \eqref{J}. After some algebraic manipulation and change of variable ($p_2^{m_1}$ is replaced by $y$), we obtain \eqref{cost2}. 

Next, we show that $J_1$ is quasi-convex. By definition of quasi-convex, it suffices to show that its first derivative $\frac{\partial J_1(y, k_1,\beta)}{\partial y}$ with respect to $y$ satisfies at least one of the following conditions \cite{boyd2004convex}:
\begin{itemize}
	\item $\frac{\partial J_1(y, k_1,\beta)}{\partial y}\geq 0$ 
	\item $\frac{\partial J_1(y, k_1,\beta)}{\partial y}\leq 0$ 
	\item there exists a point $y_0\in (0,1]$ such that for $0<y\leq y_0$, $\frac{\partial J_1(y; k_1,\beta)}{\partial y}\leq 0$, and for $1\geq y\geq y_0$, $\frac{\partial J_1(y; k_1,\beta)}{\partial y}\geq 0$
\end{itemize}
After some algebraic manipulation, $\frac{\partial J_1(y; k_1,\beta)}{\partial y}$ is expressed as
\begin{align}
	\frac{\partial J_1(y; k_1,\beta)}{\partial y}=\frac{h(y;k_1,\beta)}{(1-p_1+ry)^2} 
	\label{first_der}
\end{align} 
where $r=-1+p_1+p_2^{k_1}(1-p_2)$ and $h(y;k_1,\beta)$ is given by
\begin{align}
	h(y;k_1,\beta)\!&\triangleq \! \Big(A_1y^2\!+\!\frac{D_2y}{\ln p_2}\!-\!C_1\Big)r\!\notag\\
	&+\!\Big(B_1\!+\!2A_1 y\!+\frac{D_1(1\!+\!\ln y)}{\ln p_2}\Big)(1\!-\!p_2)
\end{align}
Note that the denominator of the \eqref{first_der} is positive. Thus, to show that at least one of the conditions above holds, it suffices to show that $h(y;k_1,\beta)$ satisfies at least one of the following conditions:
\begin{itemize}
	\item B1: $h(y;k_1,\beta)\geq 0$;
	\item B2: $h(y;k_1,\beta)\leq 0$;
	\item B3: $\exists y_0\in (0,1]$ such that for $0<y\leq y_0$, $h(y;k_1,\beta)\leq 0$, and for $1\geq y\geq y_0$, $h(y;k_1,\beta)\geq 0$.
\end{itemize}

In fact, the first derivative of $h(y;k_1,\beta)$ with respect to $y$ is 
\begin{align}
	&\frac{\partial h(y; k_1,\beta)}{\partial y}\notag\\
	=&\underbrace{\left(2(d_2(k_1+1)\!-\!d_1)y-\frac{d_2}{\ln p_2}\right)}_{G(y)}\underbrace{\left(ry+1\!-p_1\right)}_{H(y)}\frac{p_2^{k_1}W}{y}
\end{align}
where $W=d_2(1-p_1)-d_1(1-p_2)$. By condition $d_2(1-p_1)>d_1(1-p_2)$, $W>0$.
Note that since $0<p_1<p_2<1$, $r\leq -1+p_1+(1-p_2)=p_1-p_2<0$. Thus, $H(y)\geq r+1-p_1=p_2^{k_1}(1-p_2)>0$, for $0<y\leq 1$. Hence, $\frac{\partial h(y; k_1,\beta)}{\partial y}$ is positive (negative) if and only if $G(y)$ is positive (negative). Next, we will show our finial result by analyzing two different cases.

\emph{Case 1:} If $d_2(k_1+1)-d_1\geq 0$, then $G(y)\geq-\frac{d_2}{\ln p_2}>0$, for $0<y\leq 1$. In this case, $\frac{\partial h(y; k_1,\beta)}{\partial y}>0$, for $0<y\leq 1$. Thus, $h(y; k_1,\beta)$ is increasing in $y$, for $0<y\leq 1$. Note that since $D_1<0$ (by condition $d_2(1-p_1)>d_1(1-p_2)$) and $p_2<1$, we have $\lim_{y\rightarrow 0} h(y;k_1,\beta)=-\!C_1r+\!\Big(B_1\!+\frac{D_1}{\ln p_2}+\lim_{y\rightarrow 0}\frac{D_1\!\ln y}{\ln p_2}\Big)(1\!-\!p_2)<0$. Thus, if $h(1; k_1,\beta)\leq 0$, then B2 holds; otherwise, B3 holds. 

\emph{Case 2:} If $d_2(k_1+1)-d_1<0$, then $G(y)$ is decreasing in $y$ and $y_1=\frac{d_2}{2(d_2(k_1+1)-d_1)\ln p_2}>0$ is a turning point such that $G(y)>0$ when $y<y_1$ and $G(y)<0$ when $y>y_1$. 

If $y_1\geq 1$, then $G(y)\geq 0$ for $0<y\leq 1$. Thus, $\frac{\partial h(y; k_1,\beta)}{\partial y}\geq 0$, which implies that $h(y; k_1,\beta)$ is increasing in $y$ for $0<y\leq 1$. Hence, B2 or B3 holds as explained in case 1.

If $y_1<1$, then $G(y)>0$ when $y<y_1$ and $G(y)<0$ when $1>y>y_1$, which implies $h(y; k_1,\beta)$ first increases and then decreases for $0<y\leq 1$. 
We claim that if $y_1<1$, then $h(1; k_1,\beta)\geq 0$ for $\beta\in [\beta_\text{min}, \beta_\text{max}]$ and $k_1 \in\mathcal{K}_1$. Recall that $\lim_{y\rightarrow 0} h(y;k_1,\beta)<0$. With this, the claim implies that $h(y;k_1,\beta)$ starts with some negative value and increases to zero at some point $y'$ and after $h(y;k_1,\beta)$ becomes positive, it will keep positive for $y'<y\leq 1$ (B3 holds). It remains to show that our claim holds.

Since $y_1=\frac{d_2}{2(d_2(k_1+1)-d_1)\ln p_2}<1$, we have  
\begin{align}
	\frac{d_1}{d_2}>k_1+1-\frac{1}{2\ln p_2}
	\label{cond}
\end{align}

After some algebraic manipulation and simplification, $h(1; k_1,\beta)$ is expressed as
\begin{align}
	&h(1; k_1,\beta)\notag\\
	=&p_2^{2k_1}W\left(d_2(k_1+1)-d_1-\frac{d_2}{\ln p_2}\right)(1-p_2)\notag\\
	&+p_2^{k_1}(1-p_2)(1-p_1)\left((\frac{1}{1-p_1}+0.5)d_1^2-(\frac{1}{1-p_2}+0.5)d_2^2\right)\notag\\
	&+p_2^{k_1}(d_2-d_1)(1-p_1)(1-p_2)\beta
\end{align}
Since $p_2^{k_1}(d_2-d_1)(1-p_1)(1-p_2)<0$, $h(1; k_1,\beta)$ is decreasing in $\beta$ and thus $h(1; k_1,\beta)\geq h(1; k_1,\beta_\text{max})\geq h(1; k_1,(\frac{1}{1-p_1}+0.5)d_1)$. It remains to show that $h(1; k_1,(\frac{1}{1-p_1}+0.5)d_1)\geq 0$. It is equivalent to show that $\frac{h(1; k_1,(\frac{1}{1-p_1}+0.5)d_1)}{d_2^2}\geq 0$ since $d_2^2\geq 0$.
Let $z=\frac{d_1}{d_2}$, substitute this into $\frac{h(1; k_1,(\frac{1}{1-p_1}+0.5)d_1)}{d_2^2}$ and obtain a function of $z$ denoted by $w(z)$. By \eqref{cond}, the first derivative of $w(z)$ satisfies
\begin{align}
	w'(z)=&2p_2^{2k_1}(1-p_2)^2z+p_2^{k_1}(1-p_2)(\frac{1}{1-p_1}+0.5)(1-p_1)\notag\\
		\displaybreak
	+&p_2^{2k_1}(1-p_2)\left(\frac{1-p_2}{\ln p_2}-(1-p_2)(k_1+1)-(1-p_1)\right)\\
	\geq & w'(k_1+1-\frac{1}{2\ln p_2})\\
	=& p_2^{k_1}(1-p_2)\left(p_2^{k_1}(1-p_2)(k_1+1)+\frac{1-p_1}{2})\right)\notag\\
	+&p_2^{k_1}(1-p_2)\left(1-p_2^{k_1}(1-p_1)\right)\\
	>&0
	\label{tt2}
\end{align}
The \eqref{tt2} holds since $0<p_1<p_2<1$. Thus, $w(z)$ is increasing and we have
\begin{align}
	w(z)\geq & w(k_1+1-\frac{1}{2\ln p_2})\\
	=&\underbrace{p_2^{k_1}(1-p_2)}_{>0}\underbrace{\left(p_2^{k_1}(1-p_2)+\ln p_2\right)}_{ s_1(p_2)}\underbrace{\left(\frac{k_1}{2\ln p_2}-\frac{1}{4(\ln p_2)^2}\right)}_{<0}\notag\\
	+&\frac{p_2^{k_1}}{2\ln p_2}\delta(p_1,p_2,k_1)
\end{align}
where 
\begin{align}
	\delta(p_1,p_2,k_1)=&(p_1-p_2)(1-p_2)p_2^{k_1}+2(p_1-p_2)\ln p_2\notag\\
	&+2k_1(1-0.5p_2)(1-p_2)\ln p_2+(1-p_2)(0.5p_1-1)\notag
\end{align}
Note that $s_1(p_2)\leq 1-p_2+\ln p_2<0$. Thus, to show that $w(z)\geq 0$ holds, we only need to show that $\delta(p_1,p_2,k_1)\leq 0$ holds. Actually, 
\begin{align}
	&\frac{\partial \delta(p_1,p_2,k_1)}{\partial p_2}\notag\\
	&=\underbrace{(p_1-p_2)(1-p_2)p_2^{k_1}\ln p_2+k_1(2-p_1)(\frac{1}{p_2}-1-\ln p_2)}_{>0 \text{ by } 0<p_1<p_2<1}\notag\\
	&+\underbrace{p_2^{k_1}(2p_2-1-p_1)+\frac{2p_1}{p_2}-1-\frac{p_1}{2}-2\ln p_2}_{\theta(p_1,p_2,k_1)}\\&\geq 0
	\label{tt3}
\end{align}
where the \eqref{tt3} holds since $\theta(p_1,\!p_2,\!k_1)$ decreases with $p_2$ and $\theta(p_1,\!1,\!k_1)=0.5p_1>0$. Actually, 
\begin{align}
	\frac{\partial \theta(p_1,p_2,k_1)}{\partial p_2}&=(2p_2-1-p_1)p_2^{k_1}\ln p_2+2p_2^{k_1}-2\frac{p_1}{p_2^2}-\frac{2}{p_2}\label{q0}\\
	& <(p_2-1)p_2^{k_1}\ln p_2+2p_2^{k_1}-2\frac{p_1}{p_2^2}-\frac{2}{p_2}\label{q1}\\
	&\leq-(p_2-1)p_2^{k_1}\frac{1}{2p_2}+2p_2^{k_1}-2\frac{p_1}{p_2^2}-\frac{2}{p_2}\label{q2}\\
    &\leq -(p_2-1)\frac{1}{2p_2}+2-2\frac{p_1}{p_2^2}-\frac{2}{p_2} \label{q3}\\
    &<0
\end{align}
where \eqref{q1} holds since $p_2-p_1>0$ and $\ln p_2 <0$; \eqref{q2} holds by $s_2(x)\triangleq\frac{1}{x}+2\ln x\geq0$ for $0<x\leq 1$; \eqref{q3} holds since $p_2^{k_1}\leq 1$; \eqref{q3} holds since $p_2< 1$. In particular, the first derivative of $s_2$ is $s_2'(x)=-\frac{1}{x^2}+\frac{2}{x}$. Thus, $s_2$ decreases when $x\leq 0.5$ (since $s_2'(x)\leq 0$ when $x\leq 0.5$) and then increases when $x\geq 0.5$ (since $s_2'(x)\geq0$ when $x\geq 0.5$) . Thus, $s_2(x)\geq s_2(0.5)=0.2213>0$.
By \eqref{tt3}, $\delta$ increases with $p_2$ and $\delta(p_1,p_2,k_1)\leq \delta(p_1,1,k_1)=0$. This completes the proof of part (i).

(ii) If $d_1(1-p_2)\geq d_2(1-p_1)$, the optimal policy is of threshold-type in \eqref{type1}. Similar to part (i), with aid of state transition diagram, we obtain the expression of average cost $J_2$ under policy \eqref{type1} as 
	\begin{equation}
		J_2(m_2, k_2,\beta)=\frac{A_2p_1^{2m_2}+B_2p_1^{m_2}+C_2+D_2m_2p_1^{m_2}}{1-p_2+(1-p_1-p_1^{k_2}(1-p_2))p_1^{m_2}}.
		\label{J2}
	\end{equation}
In particular, the state transition diagram under policy in \eqref{type1} is given in Fig. \ref{state_trans_thres2}. 
\begin{figure}
	\centering
	\includegraphics[width=0.48\textwidth]{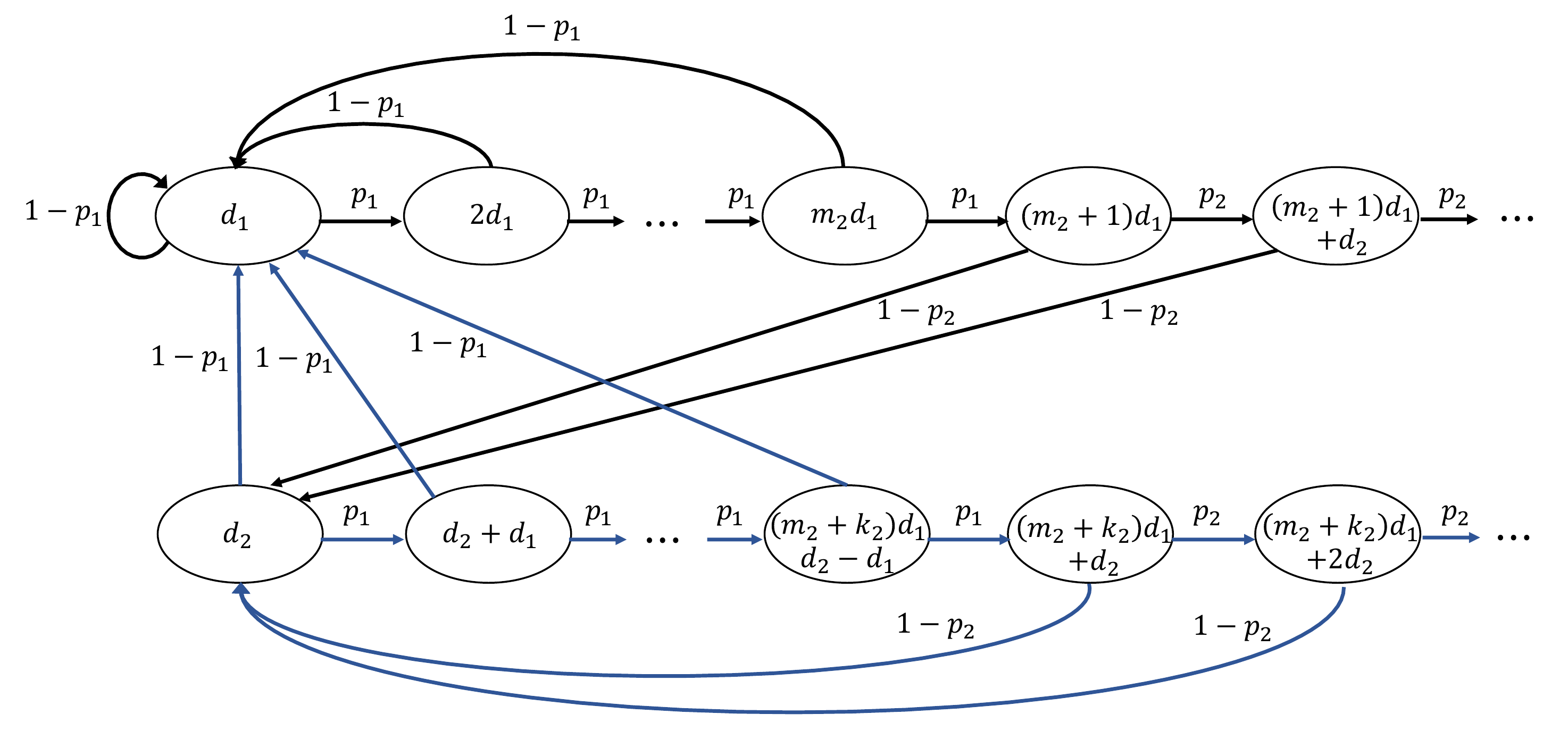}
	\caption{State transition diagram under policy in \eqref{type1}}
	\label{state_trans_thres2}
\end{figure}
Define the steady state probabilities $x_l$, $x'_l$, $y_l$ and $y'_l$ under policy in \eqref{type1} as
\begin{align}
	&x_l\triangleq\mathbb{P}(a=d_1+ld_1), &0\leq l\leq m_2,\\
	&x'_l\triangleq\mathbb{P}(a=d_1+m_2d_1+ld_2), &l\geq 0,\\
	&z_l\triangleq\mathbb{P}(a=d_2+ld_1), &0\leq l\leq m_2+k_2,\\
	&z'_l\triangleq\mathbb{P}(a=d_2+(m_2+k_2)d_1+ld_2), &l\geq 0.
	\end{align}
Based on the state transition diagram, balance equations can be obtained as follows:
\begin{align}
	& x_0=(1-p_1)\left(\sum_{l=0}^{m_2-1}x_l+\sum_{l=0}^{m_2+k_2-1}z_l\right),\label{e12}\\
	& p_1x_{l}=x_{l+1},	\ \ \ \  l\in \{0,1\cdots m_2-1\}, \label{e22}\\
	& p_1z_{l}=z_{l+1}, \ \ \ \  l\in \{0,1\cdots m_2+k_2-1\}, \label{e32}\\
	& p_2x'_{l}=x'_{l+1},	\ \ \ \  l\in \{0,1\cdots \}, \label{e42}\\
	& p_2z'_{l}=z'_{l+1},	\ \ \ \  l\in \{0,1\cdots \}. \label{e52}
\end{align}
Solving the equations \eqref{e12}-\eqref{e52}, we obtain the expressions of $x_l$, $x'_l$, $z_l$ and $z'_l$ in terms of $x_0$ as follows:
\begin{align}
	&x_l=p_1^lx_0, \ \ \ \  l\in \{0,1\cdots m_2\},\label{ee12}\\
	&z_l=\frac{p_1^{m_2}p_1^l}{1-p_1^{m_2+k_2}}x_0,\ \ \ \  l\in \{0,1\cdots m_2+k_2\},\label{ee22}\\
	&x'_l=p_1^{m_2}p_2^lx_0, \ \ \ \  l\in \{0,1\cdots\},\label{ee32}\\
	&z'_l=\frac{p_1^{2m_2+k_2}p_2^l}{1-p_1^{m_2+k_2}}x_0,\ \ \ \  l\in \{0,1\cdots\}.,\label{ee42}
\end{align}
Substituting \eqref{ee12}-\eqref{ee42} into $\sum_{l=0}^{m_2-1}x_l+\sum_{l=0}^{m_2+k_2-1}z_l+\sum_{l=0}^{\infty}x'_l+\sum_{l=0}^{\infty}z'_l=1$, we obtain $x_0$ as
\begin{align}
	x_0=\frac{(1-p_1)(1-p_2)(1-p_1^{m_2+k_2})}{1-p_2+p_1^{m_2+k_2}(p_2-1)+p_1^{m_2}(1-p_1)}.
	\label{x02}
\end{align}

The average cost $J_2(m_2, k_2,\beta)$ is expressed as
\begin{align}
J_2(m_2, k_2,\beta)=&\sum_{l=0}^{m_2-1}C(d_1+ld_1,d_1)x_l+\sum_{l=0}^{m_2+k_2-1}C(d_2+ld_1,d_1)z_l\notag\\
&+\sum_{l=0}^{\infty}C(d_1+m_2d_1+ld_2,d_2)x'_l\notag\\
&+\sum_{l=0}^{\infty}C(d_2+(m_2+k_2)d_1+ld_2,d_2)z'_l.
\label{J_2}	
\end{align}
Substitute \eqref{ee12}-\eqref{x02} into \eqref{J_2}. After some algebraic manipulation and simplification, we obtain \eqref{J2}.
	
	Next, we show that the optimal policy that solves \eqref{target1} chooses $u=2$ at each transmission opportunity, which is equivalent to show that $m_2=0$ and $n_2=0$. In particular, (1) we show that for any $m_2$, $J_2(m_2,k_2,\beta)$ is non-decreasing in $k_2$, which implies that optimal $k_2$ equals zero; (2) we show that $J_2(m_2, 0,\beta)$ is non-decreasing in $m_2$. This implies that optimal $m_2$ equals zero. Since $n_2=m_2+k_2$, then optimal $n_2$ equals to zero and thus the optimal average cost is $J_2(0,0,\beta)$, which will complete the proof of part (ii). The proof is specified as follows:
	
	(1) We show that any $m_2$, $J_2(m_2,k_2,\beta)$ is non-decreasing in $k_2$. Since $k_2\in \mathcal{K}_2=\{0,1\}$, it suffices to show that $J_2(m_2,1,\beta)-J_2(m_2,0,\beta)\geq 0$.
	    Using the expression of $J_2$ in \eqref{J2}, after some algebraic manipulation, we get
	    \begin{align}
	    	J_2(m_2,1,\beta)-J_2(m_2,0,\beta)=\frac{p_1^{2m_2}g_1(m_2,\beta)}{F_1(m_2)}
	    \end{align}
	    where
	    \begin{align}
	    	&g_1(m_2,\beta)\notag\\
	    	=&p_1^{m_2}\Big(\left(1\!-\!p_1\right)^2d_2-d_1(1-p_1)+p_1d_1(1\!-\!p_2)\Big)E\notag\\
	    	&+m_2d_1(1-p_1)(1-p_2)E\notag\\
	    	&+(1-p_1)^2(1-p_2)\big((d_2-d_1)\beta+d_1\beta_1-d_2\beta_2\big)\notag\\
	    	&+\big(\left(1\!-\!p_1\right)d_2-d_1\big)(1-p_2)E+(d_1-d_2)(1-p_1)(1-p_2)E
	    	\label{g1_m_beta}
	    \end{align}
	    where $\beta_1=\frac{1}{1-p_1}d_1+0.5d_1$, $\beta_2=\frac{1}{1-p_2}d_2+0.5d_2$, and $E=d_1(1-p_2)-d_2(1-p_1)$; and $F_1(m_2)$ is given by
	    \begin{align}
	    	F_1(m_2)=\big(1\!-\!p_2\!+\!p_1^{m_2}(1\!-\!2p_1+p_1p_2)\big)\big(1\!-\!p_2+p_1^{m_2}(p_2\!-\!p_1)\big)
	    \end{align}
	    Since $p_1<p_2<1$, we have that $1-2p_1+p_1p_2=1-p_1-p_1(1-p_2)\geq 1-p_1-(1-p_2)\geq 0$. It means that $F_1$ is a product of two non-negative expressions and thus we have $F_1(m_2)\geq 0$, for all $m_2$. Hence, to show the result, we only need to show that $g_1(m_2,\beta)\geq 0$ holds. In particular, we have 
	    \begin{align}
	    	&\left(1\!-\!p_1\right)^2d_2-d_1(1-p_1)+p_1d_1(1\!-\!p_2)
	    	\notag\\
	    	\leq & \left(1\!-\!p_1\right)^2d_2-d_1(1-p_2)+p_1d_1(1\!-\!p_2)\label{ineq1}\\
	    	= & \left(1\!-\!p_1\right)(d_2\left(1\!-\!p_1\right)-d_1\left(1\!-\!p_2\right))\\
	    	\leq & \label{ineq2} 0
	    \end{align}
	    where \eqref{ineq1} holds since $0<p_1<p_2<1$, and \eqref{ineq2} holds due to condition $d_1(1-p_2)\geq d_2(1-p_1)$ and $p_1<1$.	 In addition, by condition $d_1(1-p_2)\geq d_2(1-p_1)$, we get $E\geq 0$. Thus, the first term in \eqref{g1_m_beta} satisfies
	    \begin{align}
	    	&p_1^{m_2}\Big(\left(1\!-\!p_1\right)^2d_2-d_1(1-p_1)+p_1d_1(1\!-\!p_2)\Big)E\notag\\
	    	\geq & \Big(\left(1\!-\!p_1\right)^2d_2-d_1(1-p_1)+p_1d_1(1\!-\!p_2)\Big)E
	    	\label{first_term}
	    \end{align}
	    Moreover, the second term in \eqref{g1_m_beta} satisfies
	    \begin{align}
	    	m_2d_1(1-p_1)(1-p_2)E\geq 0
	    	\label{second_term}
	    \end{align} 
	Recall that $\beta_1=\frac{1}{1-p_1}d_1+0.5d_1$, $\beta_2=\frac{1}{1-p_2}d_2+0.5d_2$ and $\beta_{max}=\min\{\beta_1,\beta_2\}$. Since $d_1>d_2$ and $d_1(1-p_2)\geq d_2(1-p_1)$, we have $\beta_1\geq\beta_2$. Hence, $\beta\leq\beta_{max}=\min\{\beta_1,\beta_2\}=\beta_2$. Moreover, since $d_2<d_1$ and $0<p_1<p_2<1$, we have $(1-p_1)^2(1-p_2)(d_2-d_1)<0$. Hence, the third term in \eqref{g1_m_beta} satisfies
		 \begin{align}
	    	(1-p_1)^2(1-p_2)(d_2-d_1)\beta\geq (1-p_1)^2(1-p_2)(d_2-d_1)\beta_2
	    	\label{third_term}
	    \end{align}
Substitute \eqref{first_term}, \eqref{second_term} and \eqref{third_term} into \eqref{g1_m_beta}, after some algebraic manipulation, we get
\begin{align}
	g_1(m_2,\beta)
	\geq & Ed_1(1-p_1)(1-p_2)+E(1-p_1)^2d_2\notag\\
	&+0.5d_1(1-p_1)^2(1-p_2)(d_1-d_2)\\
	\geq & 0\label{ineq3}
\end{align}
where \eqref{ineq3} holds by $E\geq 0$, $d_1>d_2>0$ and $0<p_1<p_2<1$.

(2) We show that $J_2(m_2, 0,\beta)$ is non-decreasing in $m_2$ by showing $J_2(m_2+1, 0,\beta)-J_2(m_2, 0,\beta)\geq 0$, for all $m_2\in\mathbb{N}$. Using the expression of $J_2$ in \eqref{J2}, after some algebraic manipulation, we get 
\begin{align}
	&J_2(m_2+1, 0,\beta)-J_2(m_2, 0,\beta)=\frac{g_2(m_2,\beta)p_1^{m_2}}{F_2(m_2)}
\end{align}
where 
\begin{align}
	&g_2(m_2,\beta)\notag\\
	=&p_1^{2m_2+1}(d_2-d_1)(1-p_1)(p_2-p_1)E-p_1^{m_2+1}d_1(p_2-p_1)E\notag\\
	&+p_1^{m_2}(d_2-d_1)(1-p_1^2)(1-p_2)E+m_2d_1(1-p_1)(1-p_2)E\notag\\
	&+(1-p_1)^2(1-p_2)\big((d_2-d_1)\beta+d_1\beta_1-d_2\beta_2\big)\notag\\
	&+(d_1-d_2)(1-p_1)(1-p_2)E-d_1p_1(1-p_2)E,
	\label{g2_m_beta}
\end{align}
and $F_2(m_2)$ is given by
	    \begin{align}
	    	F_2(m_2)=\big(1\!-\!p_2\!+\!p_1^{m_2+1}(p_2-p_1)\big)\big(1\!-\!p_2+p_1^{m_2}(p_2\!-\!p_1)\big).
	    \end{align}
	    Since $p_1<p_2<1$, $F_2$ is a product of two non-negative terms and thus $F_2(m_2)\geq 0$, for all $m_2$. Hence, to show the result, we only need to show that $g_2(m_2,\beta)\geq 0$ holds. In particular, since $0<p_1<p_2<1$ and $d_1>d_2$, the first four terms in \eqref{g2_m_beta} satisfy 
\begin{align}
	&p_1^{2m_2+1}(d_2-d_1)(1-p_1)(p_2-p_1)E-p_1^{m_2+1}d_1(p_2-p_1)E\notag\\
	&+p_1^{m_2}(d_2-d_1)(1-p_1^2)(1-p_2)E+m_2d_1(1-p_1)(1-p_2)E\notag\\
	\geq & p_1(d_2-d_1)(1-p_1)(p_2-p_1)E-p_1d_1(p_2-p_1)E\notag\\
	&+(d_2-d_1)(1-p_1^2)(1-p_2)E
	\label{ineq5}
\end{align}	    
Substitute \eqref{third_term} and \eqref{ineq5} into \eqref{g2_m_beta}, after some algebraic manipulation, we get	 
\begin{align}
	g_2(m_2,\beta)\geq & (1-p_1)^2\big(d_1(1-p_1)+d_2p_1\big)E\notag\\
	&+0.5(1-p_1)^2(1-p_2)(d_1-d_2)d_1\\
	\geq & 0
	\label{ineq6}
\end{align}   
where \eqref{ineq6} holds by $E\geq 0$, $d_1>d_2>0$ and $0<p_1<p_2<1$.

\end{document}